\theoremstyle{plain}
\newtheorem{thm}{Theorem}[section]
\newtheorem{prop}[thm]{Proposition}
\theoremstyle{definition}
\newtheorem{mydef}[thm]{Definition}
\newtheorem{remark}[thm]{Remark}
\newtheorem*{A}{Theorem A}
\DeclareMathOperator{\dis}{d}
\DeclareMathOperator{\ev}{ev}
\DeclarePairedDelimiterX\conj[1]{\lbrace}{\rbrace}{#1} % for sets
\DeclarePairedDelimiterX\paren[1]{\lparen}{\rparen}{#1} % for parenthesis
\DeclarePairedDelimiterX\card[1]{\lvert}{\rvert}{#1} % cardinal of a set
\DeclarePairedDelimiterX\gen[1]{\langle}{\rangle}{\ifblank{#1}{\:\cdot\:}{#1}} % subspace generated by
\DeclarePairedDelimiterX\abs[1]{\lvert}{\rvert}{#1}% absolute value
\DeclarePairedDelimiter\floor{\lfloor}{\rfloor}
\DeclarePairedDelimiterX\braket[2]{\langle}{\rangle}{#1\,\delimsize\vert\,\mathopen{}#2}
\newcommand{\N}{\mathbb{N}}
\newcommand{\Z}{\mathbb{Z}}
\newcommand{\F}{\mathbb{F}}
\newcommand{\bs}{\boldsymbol}
\newcommand{\Co}{\mathcal{C}}
\newcommand{\Su}{\mathcal{S}}
\title[New quantum codes from homothetic-BCH codes]{New quantum codes from homothetic-BCH codes}
\author{C. Galindo}
\email[Carlos Galindo]{galindo@uji.es}
\address{Instituto Universitario de Matemáticas y Aplicaciones de Castellón and Departamento de
	Matemáticas, Universitat Jaume I, Campus de Riu Sec., 12071 Castelló, Spain}
\author{F. Hernando}
\email[Fernando Hernando]{carrillf@uji.es}
\address{Instituto Universitario de Matemáticas y Aplicaciones de Castellón and Departamento de
	Matemáticas, Universitat Jaume I, Campus de Riu Sec., 12071 Castelló, Spain}
\author{H. Martín-Cruz}
\email[Helena Martín-Cruz]{martinh@uji.es}
\address{Instituto Universitario de Matemáticas y Aplicaciones de Castellón and Departamento de
	Matemáticas, Universitat Jaume I, Campus de Riu Sec., 12071 Castelló, Spain}
\keywords{Quantum stabilizer codes, Hermitian self-orthogonal BCH codes, homothetic-BCH codes}
\thanks{This work has been partially supported by
	MCIN/AEI/10.13039/501100011033 and by the “European Union NextGenerationEU/PRTR”, grant PID2022-138906NB-C22, as well as by Universitat Jaume I, grants GACUJIMA/2024/03 and PREDOC/2020/39.}
\begin{document}
	
\selectlanguage{english}
	
\begin{abstract}
We introduce homothetic-BCH codes. These are a family of $q^2$-ary classical codes $\mathcal{C}$ of length $\lambda n_1$, where $\lambda$ and $n_1$ are suitable positive integers such that the punctured code $\mathcal{B}$ of $\mathcal{C}$ in the last $\lambda n_1 -n_1$ coordinates is a narrow-sense BCH code of length $n_1$. We prove that whenever $\mathcal{B}$ is Hermitian self-orthogonal, so is $\mathcal{C}$.

As a consequence, we present a procedure to obtain quantum stabilizer codes with lengths than cannot be reached by BCH codes. With this procedure we get new quantum codes according to Grassl's table \cite{codetables}.

To prove our results, we give necessary and sufficient conditions for  Hermitian self-orthogonality of BCH codes of a wide range of lengths.

\end{abstract}
	
\maketitle

\section{Introduction}

The idea of building efficient computers based in quantum mechanics arose towards the end of the 20th century. Facts like, although quantum states cannot be replicated, quantum error-correction can be performed \cite{23RBC}, or Shor quantum algorithms for prime factorization and computation of discrete logarithms \cite{Shor1}, fueled interest in quantum computing. In the last years, several sources have shown progress in the construction of quantum devices \cite{Aru, BallP, Zhong, LiuY, Natu} confirming the importance of quantum error-correction.

Quantum error-correcting codes (QECCs) were first introduced and studied in the binary case \cite{18kkk, Calderbank, Gottesman}. In a second step, QECCs were considered for the nonbinary case \cite{AK,Ketkar}.

An important family of QECCs are the so-called {\it quantum stabilizer codes}. Setting $q$ a prime power, these codes can be defined as the set of common eigenvectors (with respect to some eigenvalue) of elements in a commutative subgroup of the group generated by a nice error basis on $\mathbb{C}^{q^n}$, where $\mathbb{C}$ are the complex numbers and $n$ the length of the code. A QECC $\mathcal{Q}$ has parameters $[[n,k,d]]_q$ whenever $\mathcal{Q}$ is a linear subspace of $\mathbb{C}^{q^n}$ of dimension $q^k$ and all errors in $\mathcal{Q}$ with weight less than $d$ can be detected or have no effect on $\mathcal{Q}$, while some error with weight equal to $d$ cannot be detected. The class of quantum stabilizer codes have a major advantage, which is that these codes can be constructed from self-orthogonal additive classical codes included in $\mathbb{F}_q^{2n}$,  being $\mathbb{F}_q$ the finite field with $q$ elements and the orthogonality with respect to a trace-symplectic form. When considering classical linear  codes, it suffices to use codes $\Co \subseteq \F_{q^2}^n$ which are self-orthogonal with respect to the Hermitian inner product, as we state in the following result.

\begin{thm}\cite{Aly,Ketkar}\label{col:stabherm}
Let $\Co$ be an $[n,k]$ linear code over $\F_{q^2}$. Denote by $\boldsymbol{x}$ a vector $(x_1, \ldots, x_n) \in \F_{q^2}^n$ and consider the Hermitian inner product on $\F_{q^2}^n$: $\boldsymbol{x} \cdot_h \boldsymbol{y} := \sum_{i=1}^{n} x_i y_i^q$. Set $\Co^{\perp_h}$ the Hermitian dual code of $\Co$, i.e., $\Co^{\perp_h} := \{\boldsymbol{x} \in \F_{q^2}^n \; : \; \boldsymbol{x} \cdot_h \boldsymbol{y}=0 \mbox{ for all } \boldsymbol{y} \in \Co \}$.

Assume that $\Co\subseteq \Co^{\perp_h}$. Then, there exists an $[[n,n-2k,\geq d^{\perp_h}]]_q$ quantum stabilizer code, where $d^{\perp_h}$ stands for  the minimum distance of $\Co^{\perp_h}$.
\end{thm}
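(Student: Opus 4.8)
The plan is to reduce this statement to the symplectic formulation of the stabilizer formalism. Recall the trace-symplectic form on $\F_q^{2n}$: writing vectors as pairs $(\boldsymbol a\mid\boldsymbol b)$ with $\boldsymbol a,\boldsymbol b\in\F_q^n$, one sets $\langle(\boldsymbol a\mid\boldsymbol b),(\boldsymbol a'\mid\boldsymbol b')\rangle_s=\tr_{q/p}(\boldsymbol a\cdot\boldsymbol b'-\boldsymbol b\cdot\boldsymbol a')$, where $q=p^m$. The foundational theorem behind \cite{AK,Ketkar} asserts that an $\F_p$-additive code $D\subseteq\F_q^{2n}$ with $D\subseteq D^{\perp_s}$ and $\lvert D\rvert=q^{n-K}$ produces an $[[n,K,d]]_q$ stabilizer code, whose minimum distance equals the minimum symplectic weight $\sw$ of $D^{\perp_s}\setminus D$; in particular $d\geq\sw(D^{\perp_s})$. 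So it suffices to turn $\Co$ into such a $D$. The point that simplifies the verification is that when $D$ is in fact $\F_q$-linear, trace-symplectic self-orthogonality over $\F_p$ is equivalent to symplectic self-orthogonality over $\F_q$.

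Second, I would build the $\F_q$-linear bridge $\varphi\colon\F_{q^2}^n\to\F_q^{2n}$. Fix a nonzero $\mu\in\F_{q^2}$ with $\tr_{q^2/q}(\mu)=0$ and consider the $\F_q$-bilinear form $B(\boldsymbol x,\boldsymbol y):=\sum_i\tr_{q^2/q}(\mu\,x_iy_i^q)=\tr_{q^2/q}\!\big(\mu\,(\boldsymbol x\cdot_h\boldsymbol y)\big)$. The key computation shows $B$ is alternating and nondegenerate: since $x_i^{q+1}$ and $x_iy_i^q+x_i^qy_i$ lie in $\F_q$ and $\tr_{q^2/q}$ is $\F_q$-linear, the vanishing $\tr_{q^2/q}(\mu)=0$ forces $B(\boldsymbol x,\boldsymbol x)=0$ and $B(\boldsymbol x,\boldsymbol y)+B(\boldsymbol y,\boldsymbol x)=0$. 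Choosing an $\F_q$-basis of $\F_{q^2}$ (a trace-orthogonal one is convenient) to write each coordinate as a pair in $\F_q^2$ then identifies $B$ with the standard symplectic form and defines $\varphi$. Under it, $\boldsymbol x\cdot_h\boldsymbol y=0$ forces $B(\boldsymbol x,\boldsymbol y)=0$, i.e. symplectic orthogonality of $\varphi(\boldsymbol x)$ and $\varphi(\boldsymbol y)$; hence $\Co\subseteq\Co^{\perp_h}$ gives $D:=\varphi(\Co)\subseteq\varphi(\Co)^{\perp_s}=D^{\perp_s}$, so $D$ is symplectic self-orthogonal.

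Third comes the bookkeeping. As $\Co$ is $\F_{q^2}$-linear of dimension $k$, its $\F_q$-linear image $D$ has $\F_q$-dimension $2k$, whence $\lvert D\rvert=q^{2k}$; matching $\lvert D\rvert=q^{n-K}$ gives $K=n-2k$, the stated dimension. A dimension comparison yields $\varphi(\Co^{\perp_h})=D^{\perp_s}$: indeed $\Co^{\perp_h}$ has $\F_{q^2}$-dimension $n-k$, so its image has $\F_q$-dimension $2n-2k=\dim_{\F_q}D^{\perp_s}$, while the inclusion $\varphi(\Co^{\perp_h})\subseteq D^{\perp_s}$ follows from $B$ as above. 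Finally, a coordinate $x_i$ vanishes exactly when its image pair does, so $\sw(\varphi(\boldsymbol x))=\w(\boldsymbol x)$ and $\varphi$ is an isometry from the Hamming metric on $\F_{q^2}^n$ to the symplectic metric on $\F_q^{2n}$; thus the minimum symplectic weight of $D^{\perp_s}$ equals the minimum Hamming distance $d^{\perp_h}$ of $\Co^{\perp_h}$. Therefore $d=\sw(D^{\perp_s}\setminus D)\geq\sw(D^{\perp_s})=d^{\perp_h}$, the announced bound.

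The main obstacle is the central claim of the second step: exhibiting a fixed trace-zero $\mu$ and an $\F_q$-basis of $\F_{q^2}$ for which $B$ becomes the standard symplectic form, and checking simultaneously that $\varphi$ is an isometry between the two metrics. Once this correspondence is in place, the theorem follows by combining the $\F_q$-linear symplectic stabilizer theorem with the elementary dimension and weight counts above.
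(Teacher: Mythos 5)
The paper itself gives no proof of this statement---it is quoted directly from \cite{Aly,Ketkar}---so the only meaningful comparison is with the argument in those references, and your proposal is essentially that argument: reduce Hermitian self-orthogonality to symplectic self-orthogonality via a coordinatewise $\F_q$-linear identification $\varphi\colon\F_{q^2}^n\to\F_q^{2n}$, then invoke the additive stabilizer theorem and do the dimension and weight bookkeeping. Your variant of the bridge (the $\F_q$-valued alternating form $\tr_{q^2/q}(\mu\,(\boldsymbol{x}\cdot_h\boldsymbol{y}))$ with $\tr_{q^2/q}(\mu)=0$, instead of the $\F_p$-valued trace-alternating form of Ketkar et al.) is correct and harmless, since for $\F_q$-linear codes the trace-symplectic and symplectic duals coincide, and the ``main obstacle'' you flag is resolved by elementary linear algebra: your form restricts to a nondegenerate alternating form on each coordinate copy of $\F_{q^2}$, so a symplectic basis can be chosen coordinatewise, which simultaneously identifies $B$ with the standard symplectic form and makes $\varphi$ carry Hamming weight to symplectic weight.
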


Theorem \ref{col:stabherm} and a close result that uses dual-containing classical codes with respect to the Euclidean inner product (CSS procedure) give rise to most families  of quantum stabilizer codes appearing in the vast literature  on these codes, some of these papers are \cite{BE, AK, Ketkar, Aly, XingC, lag2, Anderson}.

Bose-Chaudhuri-Hoquengheim codes (BCH codes) were discovered in 1959-1960 and are very useful in classical coding theory. They can be regarded as cyclic codes and are introduced in most information theory books. A seminal article by Aly et al. \cite{Aly} provided a bound on the designed distance of narrow-sense  BCH codes which implies Hermitian dual containing. Thus, following Theorem \ref{col:stabherm} and considering Hermitian dual codes, good quantum stabilizer codes can be obtained. Some further contributions \cite{LMFL2013,KZT2013,lag2,HZC2015,XLGM2016,LLLM2017,YZKL2017,QZ2017,LLLG2017,ZSL2018} have produced new families of QECCs coming from  classical BCH, negacyclic or constacyclic codes improving the results in \cite{Aly} for particular code lengths.

Next, we recall some recent references in the direction of considering Hermitian dual containing BCH codes for supplying quantum stabilizer codes. For a start we cite \cite{SYW2019}, where it is only considered the primitive case. Also \cite{LLGW2019}, where the authors characterize Hermitian dual-containing BCH codes of length $n = \frac{q^{2s}-1}{a}$, $s\geq 3$ odd, $a$ divides $q^s+1$ and $3 \leq a \leq  2(q^2- q + 1)$. In \cite{LS2021}, it is characterized the case $s=4\eta$, $\eta$ odd, $q \equiv 1 \mod 4$ and $n = \frac{q^{2s}-1}{4(q^2-1)}$. We also mention \cite{XL2021}, whose Hermitian construction deals with lengths which are divisible by $(q^\alpha -1)$, $\alpha$ being the order of $q^2$ modulo $n$, and \cite{ZZ2021}, where a sufficient condition for  Hermitian dual containing is given for lengths $n = r(q^2- 1)$, $r$ being a positive integer. Finally, in \cite{L2022} the author studies the maximum designed distances of LCD Hermitian-dual containing BCH codes.

In this paper, we introduce a new family of $q^2$-ary classical codes which is formed by the codes that we name {\it homothetic-BCH}. These codes can be defined as subfield-subcodes over $\F_{q^2}$ of certain $q^{2s}$-ary, $s \geq 2$, evaluation codes which we name homothetic evaluation codes, see Definition \ref{definit}.

BCH codes can be regarded as subfield-subcodes of a family of evaluation codes of univariate polynomials called \{1\}-affine variety codes \cite{GH2015,Cas}. This way of looking at the BCH codes makes them able to be extended to subfield-subcodes of certain evaluation codes of polynomials in several variables, called $J$-affine variety codes \cite{GH2015, GGHR2017}. In this article we take this point of view, under which it is convenient to consider Hermitian self-orthogonality instead of dual containing to obtain QECCs. Thus, for us, a BCH code is the subfield-subcode over $\mathbb{F}_{q^2}$ of an evaluation code over $\mathbb{F}_{q^{2s}}$ defined by a union of cyclotomic cosets with respect to $q^2$ (see Definition \ref{AA}). When considering a narrow-sense situation, our BCH codes are Euclidean duals of the corresponding narrow-sense BCH codes defined as cyclic codes, and Hermitian dual containment of these last codes determines QECCs with the same parameters as Hermitian self-orthogonal BCH codes regarded as subfield-subcodes of evaluation codes.

Homothetic-BCH codes have lengths that cannot be achieved by univariate $\{1\}$-affine variety codes and a suitable puncturing of a homothetic-BCH code $\mathcal{C}$ gives rise to a narrow-sense BCH code $\mathcal{B}$ in such a way that, whenever $\mathcal{B}$ is Hermitian self-orthogonal, so is $\mathcal{C}$, see Theorem \ref{self}. This fact together with that of the dimension and minimum distance of a homothetic-BCH code can be bounded in a similar way to the BCH case (see Theorem \ref{bounds}) allow us to find {\it a large family of new Hermitian self-orthogonal codes providing new quantum codes}.

Let us state our main result, for which we first introduce the following notation. Set $s$ a positive integer and  $\Lambda_{a_0}, \Lambda_{a_1}, \ldots, \Lambda_{a_v}$ the cyclotomic cosets with respect to $q^2$ in $\frac{\mathbb{Z}}{(q^{2s}-1)\mathbb{Z}}$, where  $a_i$, $0 \leq i \leq v$, are the minimum nonnegative representatives of $\Lambda_{a_i}$. Pick $1 \leq \tau \leq v$ and set $\Delta_1(\tau) := \cup_{\ell=1}^{\tau} \Lambda_{a_{\ell}}$. Consider a positive integer $n_1$ such that it divides $q^{2s}-1$ and set $\Delta'_1(\tau) := \cup_{\ell=1}^{\tau'} a'_\ell$ the union of cyclotomic cosets in $\frac{\mathbb{Z}}{n_1\mathbb{Z}}$ obtained by reducing modulo $n_1$ the elements in $\Delta_1(\tau)$.

\begin{A}[Theorem \ref{mainth} in the paper]
{\it Let $q$ be a power of a prime $p$ and consider $s\geq 2$ a positive integer. Let $n_1$ be a positive integer that divides $q^{2s}-1$ as we are going to introduce, and define
    $$L:=\begin{cases}
			qn_2- \min\left\{ \floor*{\frac{(q-1)n_2}{q^{s-1}+1}},  \floor*{\frac{(q-1)n_2-1}{q^{s-1}}} \right\}-1, & \text{if } n_1=(q^s+1)n_2 \text{, } n_2\mid q^s-1\text{, } s \text{ even,}\\
			n_2-1 & \text{if } n_1=(q^s+1)n_2 \text{, } n_2\mid q^s-1\text{, } s \text{ odd,}\\
            q^{\frac{s+a}{2}}+q^{\frac{s-a}{2}}-2 & \text{if } n_1=(q^s-1)(q^a+1) \text{, } s> a\neq 0 \text{, } \frac{s+a}{2} \text{ odd,}\\
            2q^{\frac{s}{2 }}-3 & \text{if } n_1=2(q^s-1) \text{, } s> 0 \text{, } \frac{s}{2} \text{ odd,}\\
            q(q^{\frac{s+a}{2}}-q^a-1)-1 & \text{if } n_1=(q^s-1)(q^a+1) \text{, } s> a \text{, } \frac{s+a}{2} \text{ even,}\\ & \text{(except when $q=2$ and $a=s-2$).}
	\end{cases}
    $$
Set $\lambda$ another positive integer such that $\lambda\leq \frac{q^{2s}-1}{n_1}$ and $\lambda n_1 \nmid q^{2s}-1$.

Consider a set $\Delta_1(\tau)$ such that $a'_{\tau'} \leq L$. Then, the homothetic-BCH code given by  $\Delta_1(\tau)$ is a $[\lambda n_1,\,\leq \sum_{\ell=1}^\tau \# \Lambda_{a_\ell}]_{q^2}$ Hermitian self-orthogonal code whose Hermitian dual has minimum distance at least $a_{\tau+1}$ and, therefore,  there exists a
    $$[[\lambda n_1, \geq \lambda n_1-2\sum_{\ell=1}^\tau \# \Lambda_{a_\ell},\,\geq a_{\tau+1}]]_q$$
quantum stabilizer code.

Additionally, assume that $a'_{\tau'} \leq L$ and $p\mid \lambda$. Then, the homothetic-BCH code given by  $\Lambda_{a_0} \cup \Delta_1(\tau)$ is a $[\lambda n_1,\,\leq \sum_{\ell=0}^\tau \# \Lambda_{a_\ell}]_{q^2}$ Hermitian self-orthogonal code whose Hermitian dual has minimum distance at least $a_{\tau+1}+1$ and, therefore, there exists a
$$[[\lambda n_1, \geq \lambda n_1-2\sum_{\ell=0}^\tau \# \Lambda_{a_\ell},\,\geq a_{\tau+1}+1]]_q$$
quantum stabilizer code.}
\end{A}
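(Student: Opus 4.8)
The plan is to reduce the Hermitian self-orthogonality of the homothetic-BCH code $\Co$ to that of its punctured BCH code $\mathcal{B}$, and then to read off the quantum parameters from the structural results already at our disposal. First I would invoke Theorem \ref{self}: since $\mathcal{B}$ is the narrow-sense BCH code of length $n_1$ obtained by puncturing $\Co$ in the last $\lambda n_1 - n_1$ coordinates, it suffices to prove that $\mathcal{B}$ is Hermitian self-orthogonal. By construction $\mathcal{B}$ is the length-$n_1$ narrow-sense BCH code determined by $\Delta'_1(\tau) = \cup_{\ell=1}^{\tau'} a'_\ell$, the reduction modulo $n_1$ of $\Delta_1(\tau)$, so its largest representative is $a'_{\tau'}$. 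The role of $L$ is precisely that of the threshold coming from our necessary-and-sufficient characterization of Hermitian self-orthogonality for BCH codes of the admissible lengths $n_1$: such a code is Hermitian self-orthogonal exactly when the maximal representative of its defining set does not exceed $L$. Hence the hypothesis $a'_{\tau'} \leq L$ yields that $\mathcal{B}$, and therefore $\Co$, is Hermitian self-orthogonal.

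For the parameters I would appeal to Theorem \ref{bounds}. The dimension bound $\dim \Co \leq \sum_{\ell=1}^\tau \#\Lambda_{a_\ell}$ holds because $\Co$ is spanned by the evaluations of the monomials indexed by $\Delta_1(\tau) = \cup_{\ell=1}^\tau \Lambda_{a_\ell}$, so its dimension is at most the total number of such exponents. For the Hermitian dual, note that the minimal representatives $a_1 < \cdots < a_\tau$ force $\{1,2,\dots,a_{\tau+1}-1\}\subseteq \Delta_1(\tau)$, a run of consecutive exponents; the BCH-type bound of Theorem \ref{bounds} then gives a designed distance for $\Co^{\perp_h}$ of at least $a_{\tau+1}$, the next representative not used. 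With $\Co \subseteq \Co^{\perp_h}$ and these two bounds in hand, Theorem \ref{col:stabherm} produces a quantum stabilizer code with parameters $[[\lambda n_1,\, \geq \lambda n_1 - 2\sum_{\ell=1}^\tau \#\Lambda_{a_\ell},\, \geq a_{\tau+1}]]_q$.

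For the second assertion, adjoining the coset $\Lambda_{a_0} = \{0\}$ amounts to adding the constant evaluation vector $\bs{1}$ to the spanning set. Its Hermitian self-inner product equals $\bs{1}\cdot_h \bs{1} = \lambda n_1 \cdot 1 \in \F_{q^2}$, which vanishes iff $p \mid \lambda n_1$; since $n_1 \mid q^{2s}-1$ is coprime to $p$, this happens exactly when $p \mid \lambda$. Moreover $\bs{1}$ is Hermitian-orthogonal to every $\ev(X^a)$ with $a \in \Delta_1(\tau)$, because such an exponent is nonzero and the corresponding character sum over the evaluation points vanishes. Thus, under $p \mid \lambda$, the enlarged code stays Hermitian self-orthogonal, its dimension is bounded by $\sum_{\ell=0}^\tau \#\Lambda_{a_\ell}$, and including the exponent $0$ extends the consecutive run underlying the BCH bound by one, raising the designed distance of the dual to $a_{\tau+1}+1$. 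A second application of Theorem \ref{col:stabherm} gives the stated $[[\lambda n_1,\, \geq \lambda n_1 - 2\sum_{\ell=0}^\tau \#\Lambda_{a_\ell},\, \geq a_{\tau+1}+1]]_q$ quantum code.

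The main obstacle is the input I have been citing as a black box: establishing the exact value of $L$ in each of the five cases. This requires translating Hermitian self-orthogonality of a length-$n_1$ BCH code into the combinatorial condition that $\Delta'_1(\tau)$ be disjoint from its image $-q\,\Delta'_1(\tau)$ modulo $n_1$, and then determining, separately for $n_1=(q^s+1)n_2$ (with $s$ even or odd), for $n_1=(q^s-1)(q^a+1)$ (according as $\tfrac{s+a}{2}$ is odd or even), and for $n_1=2(q^s-1)$, the largest representative that can be admitted before the set collides with that image. This case analysis, including the exceptional behaviour when $q=2$ and $a=s-2$, is where the delicate bookkeeping with cyclotomic cosets modulo these specific lengths must be carried out, and it is what forces the floor-function thresholds appearing in the definition of $L$.
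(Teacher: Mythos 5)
Your proposal is correct and takes essentially the same route as the paper: it assembles Theorem \ref{self} (reduction of Hermitian self-orthogonality of the homothetic-BCH code to that of the punctured narrow-sense BCH code), Propositions \ref{Case1}--\ref{Case4} (the sharp threshold $L$ in each of the five length cases), Theorem \ref{bounds} (dimension and dual-distance bounds), and Theorem \ref{col:stabherm} (the quantum construction), which is exactly how Theorem \ref{mainth} is proved. Your explicit treatment of adjoining $\Lambda_{a_0}$ --- the all-ones vector having Hermitian self-product $\lambda n_1$, which vanishes precisely when $p\mid\lambda$ since $\gcd(n_1,p)=1$ --- matches the observation made inside the proof of Theorem \ref{self} (the case $e=e'=0$) and the remark following it.
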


Applying Theorem A, we are able to get new quantum codes for which there is no known construction according to \cite{codetables}, see Subsections \ref{421} and \ref{422}, where we give new binary and  $5$-ary QECCs. These examples are only a sample of those that can be found with our procedure, no exhaustive searching has been conducted. We also give new $8$-ary QECCs by using our way of seeing BCH codes (see Subsection \ref{423}).

As before mentioned, we can prove Hermitian self-orthogonality of homothetic-BCH codes from Hermitian self-orthogonality of narrow-sense BCH codes. Then, in Section \ref{HBCH}, to make our paper more complete, we give a condition, Condition (\ref{CondZ}), characterizing Hermitian self-orthogonality of narrow-sense BCH codes. Since the condition has to do with the solutions of a certain bivariate equation, we specialize our condition to four different lengths in  Propositions \ref{Case1}, \ref{Case2}, \ref{Case3} and \ref{Case4}. These propositions support our main result. Finally, in Remark \ref{REM}, we show that, on the one hand, Propositions \ref{Case3} and \ref{Case4} allow us to get a major improvement on the bound given in \cite[Theorem 13]{Aly} to get QECCs from Hermitian duality. On the other hand, we prove that our characterization of Hermitian self-orthogonality of BCH codes contemplates new cases being complementary to previous results on the subject.

Section \ref{Sect2} of the paper introduces homothetic-BCH codes and proves that they are Hermitian self-orthogonal when the corresponding punctured BCH codes are Hermitian self-orthogonal. Section \ref{HBCH} characterizes Hermitian self-orthogonality of several families of BCH codes. As a consequence, in Subsection \ref{Sect41}, we state our main result on QECCs coming from homothetic-BCH codes. We conclude the paper by giving, in Subsection \ref{Sect42}, new QECCs whose parameters have no known construction.

% and \cite{ZKZ2023} where they determine the maximum designed distance of Hermitian dual-containnig constacyclic narrow-sense BCH codes of length $\frac{q^{2m}-1}{a(q+1)}$ over $F_{q^2}$, where $m\geq2$ is even and $a > 1$ is a divisor of $q-1$.

\section{Homothetic-BCH codes}
\label{Sect2}

Let $\mathcal{D} \subseteq \mathbb{F}^N$ be a linear error-correcting code of length $N$ over a finite field $\mathbb{F}$. In this paper, a linear error-correcting code $\mathcal{E} \subseteq \mathbb{F}^{N'}$, $N' > N$, is said to be an {\it enlargement} of $\mathcal{D}$ whenever $\mathcal{D}$ is the projected set of $\mathcal{E}$ to the $N$ first coordinates, i.e. $\mathrm{pr}_{\{1, \ldots, N\}} (\mathcal{E}) = \mathcal{D}$, where $\mathrm{pr}_{\{1, \ldots, N\}}: \mathbb{F}^{N'} \rightarrow \mathbb{F}^N$ is the map $\mathrm{pr}_{\{1, \ldots, N\}} (x_1, \ldots, x_{N'}) = (x_1, \ldots, x_N)$. Equivalently, one can say that $\mathcal{D}$ is the puncturing of $\mathcal{E}$ in the last $N' - N$ coordinates.

We devote this section  to introduce a new family of evaluation codes which will be used to obtain new quantum stabilizer codes. Our family is formed by codes that enlarge BCH codes.

For our purposes it is useful to regard BCH codes as evaluation codes. Our first subsection recalls that fact.

\subsection{BCH codes as evaluation codes}

Along this paper, $p$ denotes a prime number and $q$ a power of $p$. Our supporting finite field is $\mathbb{F}_{q^{2s}}$, where $s$ is a positive integer $s \geq 2$, and this is because we are interested in subfield-subcodes and in quantum codes coming from Hermitian self-orthogonal codes.

Let $N$ be a divisor of $q^{2s} - 1 $, consider the polynomial ring $\F_{q^{2s}}[X]$ and its ideal $J(N) = \gen*{X^{N} -1}_{\F_{q^{2s}}[X]}$. Sometimes, when considering an ideal and the context is clear, we omit the subindex. Setting $U(N) = \{\beta_1, \ldots, \beta_N\}$ the zero set of $J(N)$, we define the evaluation map
$$\ev_{U(N)}: =\faktor{\F_{q^{2s}}[X]}{J(N)} \to \F_{q^{2s}}^{N} \textrm{, } \quad \ev_{U(N)}(h)=\left(h(\beta_1),\dots,h(\beta_{N})\right),$$
where $h$ stands for a class in $\faktor{\F_{q^{2s}}[X]}{J(N)}$. For simplicity, we also denote by $h$ the representative of minimum degree of the class $h$.

Consider the set $E(N) := \{0, 1, \ldots, N-1\}$ regarded as a set of representatives of the quotient ring $\Z / N\Z$ and fix a subset $\Delta \subseteq E(N)$. Then, the $\conj*{1}$-affine variety code $\Co_{\Delta}^{U(N)}$ is defined as
    $$\Co_{\Delta}^{U(N)}:=\gen*{\ev_{U(N)}(X^e) \mid e \in \Delta}=\ev_{U(N)}(\gen*{X^e \mid e\in \Delta})\subseteq \F_{q^{2s}}^{N},$$
where $\gen*{A}$ stands for the $\F_{q^{2s}}$-vector space generated by $A$. $\Co_{\Delta}^{U(N)}$ is a particular case of $J$-affine variety code as introduced in \cite{GGHR2017}; $J$-affine variety codes are designed to extend the above construction to evaluation of polynomials in several variables.

When $\Delta = \{0, \ldots, k-1\}$, we get an RS code $\Co_{\Delta}^{U(N)}$ with parameters $[N,k,N-k+1]_{q^{2s}}$.

Following \cite{Bier, Cas}, one can regard BCH codes over $\F_{q^{2}}$ as subfield-subcodes of $\conj*{1}$-affine variety codes. Indeed, recalling that $E(N)$ is a set of representatives of $\Z / N\Z$, the {\it cyclotomic cosets in $E(N)$ with respect to $q^2$} are the sets of the form
$$\Lambda=\conj{q^{2i}e \mid i\geq 0}\subseteq E(N)$$
for some element $e\in E(N)$, where the products are modulo $N$. A set of this type is usually represented as $\Lambda_a$, where $a$ is the minimum element in $\Lambda$ with respect the usual order in $\N$. Denote by
    $$\mathcal{A}(N)=\conj*{a_0(N)=0<a_1(N)<\cdots<a_{\nu}(N)}\subseteq E(N)$$ the ordered set of representatives of all cyclotomic cosets. We will dispense the term $(N)$ when it is clear from the context.

Let us state the definition of BCH code we are going to use.
\begin{mydef}
\label{AA}
The $q^2$-ary BCH code of length $N$ given by the pair $(t,t')$, $0 \leq t \leq t' \leq \nu$, $\mathrm{BCH}^N(t,t')$, is the subfield-subcode $\mathcal{C}^{U(N)}_{\Delta^{U(N)} (t,t')} \cap \mathbb{F}_{q^2}^N$, where $\Delta^{U(N)} (t,t') = \cup_{\ell =t}^{t'} \Lambda_{a_\ell}$.
\end{mydef}

By \cite{GH2015,GGHR2017}, one gets the following well-known properties.

\begin{prop}
\label{BCH}
The following statements hold.
\begin{enumerate}
\item The code $\mathrm{BCH}^N(t,t')$ is a $q^2$-ary code with length $N$ and dimension $\sum_{\ell=t}^{t'} \# \Lambda_{a_\ell}$.
\item The minimum distance $d$ of the Hermitian dual $[\mathrm{BCH}^N(1,t)]^{\perp_h}$ (respectively,\\ $[\mathrm{BCH}^N(0,t)]^{\perp_h}$) of the code $\mathrm{BCH}^N(1,t)$ (respectively, $\mathrm{BCH}^N(0,t)$) satisfies $d \geq a_{t+1}$ (respectively, $d \geq a_{t+1} +1$).

\end{enumerate}
\end{prop}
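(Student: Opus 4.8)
The plan is to realize each code $\mathrm{BCH}^N(t,t')$ as an ordinary cyclic code over $\F_{q^2}$ by identifying its \emph{defining set of zeros}, after which both assertions reduce to bookkeeping with $q^2$-cyclotomic cosets together with the classical BCH bound. To set this up I fix $U(N)=\{1,\alpha,\dots,\alpha^{N-1}\}$ with $\alpha$ a primitive $N$-th root of unity (admissible since $N\mid q^{2s}-1$ forces $\gcd(N,p)=1$, so $X^N-1$ is separable with all its roots in $\F_{q^{2s}}$), and I identify a vector with the polynomial whose coordinates are its coefficients, so that the cyclic structure is the natural one.

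The first step is to compute the defining set of the $\{1\}$-affine variety code $\Co_\Delta^{U(N)}$ over $\F_{q^{2s}}$ for $\Delta=\Delta^{U(N)}(t,t')$. Writing a codeword as $\ev_{U(N)}(f)$ with $\supp(f)\subseteq\Delta$ and evaluating the associated polynomial at $\alpha^m$, the orthogonality relation (the sum $\sum_{j=0}^{N-1}\alpha^{j\ell}$ equals $N$ when $\ell\equiv 0\pmod N$ and $0$ otherwise) shows that the value at $\alpha^m$ is a nonzero multiple of the coefficient $f_{-m}$. Hence $\Co_\Delta^{U(N)}$ is exactly the set of vectors vanishing at $\alpha^m$ for all $m$ with $-m\bmod N\notin\Delta$, i.e.\ its defining set is $Z_0:=-\bigl(E(N)\setminus\Delta\bigr)\pmod N$. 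For part (1) I note that, since $\Delta$ is a union of $q^2$-cyclotomic cosets, so are $E(N)\setminus\Delta$ and $Z_0$; the standard subfield-subcode argument (a polynomial over $\F_{q^2}$ vanishing at $\alpha^z$ also vanishes at $\alpha^{q^2z}$) then shows that $\mathrm{BCH}^N(t,t')=\Co_\Delta^{U(N)}\cap\F_{q^2}^N$ is the cyclic code over $\F_{q^2}$ with the very same defining set $Z_0$. Its length is $N$ and its dimension is $N-|Z_0|=|\Delta|=\sum_{\ell=t}^{t'}\#\Lambda_{a_\ell}$, which is (1).

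For part (2) I pass to the Hermitian dual through $\Co^{\perp_h}=(\Co^{(q)})^{\perp_e}$, where $\Co^{(q)}$ is the conjugate code and $\perp_e$ the Euclidean dual, and I track defining sets through these two operations: conjugation multiplies the defining set by $q$, while the Euclidean dual sends a defining set $Z$ to $-\bigl(E(N)\setminus Z\bigr)$. Carrying this out gives that $[\mathrm{BCH}^N(1,t)]^{\perp_h}$ has defining set $q\Delta$ with $\Delta=\cup_{\ell=1}^{t}\Lambda_{a_\ell}$, and similarly $[\mathrm{BCH}^N(0,t)]^{\perp_h}$ has defining set $q\Delta_0$ with $\Delta_0=\cup_{\ell=0}^{t}\Lambda_{a_\ell}$. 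By the definition of the ordered representatives $a_0=0<a_1<\cdots$, every integer below $a_{t+1}$ lies in one of $\Lambda_{a_0},\dots,\Lambda_{a_t}$, so $\Delta\supseteq\{1,2,\dots,a_{t+1}-1\}$ and $\Delta_0\supseteq\{0,1,\dots,a_{t+1}-1\}$. Consequently $q\Delta$ (resp.\ $q\Delta_0$) contains the arithmetic progression $\{q,2q,\dots,(a_{t+1}-1)q\}$ (resp.\ with the extra term $0$) of common difference $q$, and $\gcd(q,N)=1$. Applying the multiplier permutation $i\mapsto q^{-1}i$, which is a coordinate permutation and hence preserves the minimum distance, reduces to a cyclic code whose defining set contains $a_{t+1}-1$ (resp.\ $a_{t+1}$) consecutive integers, so the classical BCH bound yields $d\geq a_{t+1}$ (resp.\ $d\geq a_{t+1}+1$).

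The computation of the defining set of $\Co_\Delta^{U(N)}$ is routine, and once it is in hand part (1) is immediate. The step demanding most care is the bookkeeping in part (2): following the sign and the factor $q$ through conjugation and Euclidean dualization modulo $N$, and then invoking the arithmetic-progression (multiplier) form of the BCH bound rather than its bare consecutive-roots version. I emphasize that any sign ambiguity there is harmless, since whatever unit multiple of $\Delta$ appears is a dilation by an element coprime to $N$ of a set containing a block of consecutive integers, so the multiplier-equivalence argument delivers the stated bounds regardless.
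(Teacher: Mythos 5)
Your proof is correct. Note, however, that the paper does not actually prove Proposition \ref{BCH}: it invokes \cite{GH2015,GGHR2017}, where these facts are established inside the framework of subfield-subcodes of ($J$-)affine variety codes (dimension via trace-map/cyclotomic-coset arguments, dual distance via designed-distance arguments for evaluation codes). What you do instead is rebuild the statement from scratch through the classical cyclic-code dictionary: you identify $\Co_\Delta^{U(N)}$ as the cyclic code with defining set $Z_0=-\bigl(E(N)\setminus\Delta\bigr)$ (via the discrete-Fourier computation $c(\alpha^m)=Nf_{-m}$, valid since $\gcd(N,p)=1$ makes $N\neq 0$ in $\F_{q^{2s}}$), observe that $Z_0$ is $q^2$-closed so the subfield subcode is exactly the $\F_{q^2}$-cyclic code with the same defining set --- giving the exact dimension $\lvert\Delta\rvert$ in (1), not merely a bound --- and then track defining sets through $\Co^{\perp_h}=(\Co^{(q)})^{\perp_e}$ to land on $q\Delta$, which contains the dilated block $q\{1,\dots,a_{t+1}-1\}$ (resp.\ $q\{0,\dots,a_{t+1}-1\}$), so the multiplier form of the BCH bound gives (2). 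The individual steps all check out: the unit/sign bookkeeping $-\bigl(E(N)\setminus qZ_0\bigr)=q\Delta$ is right; the inclusion $\{1,\dots,a_{t+1}-1\}\subseteq\Delta$ holds because the $a_i$ are the increasing minimal coset representatives and $\Lambda_{a_0}=\{0\}$; and multipliers $i\mapsto q^{-1}i$ are coordinate permutations, hence distance-preserving. Your route buys a self-contained, elementary proof that makes explicit the equivalence with the ``BCH codes as cyclic codes'' viewpoint the paper only sketches in its introduction; the paper's citation route buys brevity and stays in the evaluation-code language, which is the formalism the rest of the paper (homothetic evaluation codes, several-variable generalizations) actually needs.
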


\subsection{Homothetic evaluation and homothetic-BCH codes}
\label{SS22}
We are going to introduce the before announced new family of evaluation codes. Let $q$ and $s$ be integers as in the above subsection and fix  a positive integer $n_1$ which divides $q^{2s} - 1 $.  Consider also a positive integer $\lambda$ such that $\lambda\leq \frac{q^{2s}-1}{n_1}$ and $\lambda n_1 \nmid q^{2s}-1$. Let $\gamma\in\F_{q^{2s}}$ be a primitive element of the field $\F_{q^{2s}}$ and also take a primitive $n_1$-th root of unity, $\zeta_{n_1}\in\F_{q^{2s}}$.

Set
$$I=\gen*{(X^{n_1}-1)(X^{n_1}-\gamma^{n_1})\cdots (X^{n_1}-\gamma^{(\lambda-1)n_1})}_{\F_{q^{2s}}[X]}$$
the ideal of the ring of univariate polynomials in $\F_{q^{2s}}[X]$ with vanishing set (in $\F_{q^{2s}}$) $$P=\conj*{\alpha_1,\dots,\alpha_n}=
\conj*{1,\zeta_{n_1},\dots,\zeta_{n_1}^{n_1-1},\gamma,\gamma\zeta_{n_1},
\dots,\gamma\zeta_{n_1}^{n_1-1},\dots,\gamma^{\lambda-1},\gamma^{\lambda-1}\zeta_{n_1},\dots,\gamma^{\lambda-1}\zeta_{n_1}^{n_1-1}}.$$

It is clear that $P \subset U(q^{2s}-1) = \{\beta_1, \ldots, \beta_{q^{2s}-1}\} = \F_{q^{2s}}\backslash \conj*{0}$ and we take the convention that the elements in $U(q^{2s}-1)$ are ordered in such a way that the first ones are those in $P$; that is $\beta_1 = \alpha_1, \ldots, \beta_n = \alpha_n$. The same happens with $U(n_1) \subset P$.

Let
    $$\mathcal{R}=\faktor{\F_{q^{2s}}[X]}{I}$$
be the quotient ring by $I$ and consider the linear evaluation map
    $$
    \ev_P: \mathcal{R} \paren*{=\faktor{\F_{q^{2s}}[X]}{I}} \to \F_{q^{2s}}^{n} \textrm{, } \quad \ev_P(f)=\paren{f(\alpha_1),\dots,f(\alpha_n)},
    $$
where, as above, $f\in\mathcal{R}$ denotes an equivalence class in $\mathcal{R}$ and the unique polynomial in $\F_{q^{2s}}[X]$ with degree less than $\lambda n_1$ representing $f$. It is clear that $f$ can be written in a unique way as a linear combination of monomials with exponents in the set $E:=\conj*{0,1,\dots,\lambda n_1-1}$.

We are ready to introduce the families of codes we are interested in.

\begin{mydef}
\label{definit}
Let $\Delta \subseteq E(q^{2s}-1)$. The $q^{2s}$-ary {\it homothetic evaluation code} given by $\Delta$ is defined to be the following vector subspace of $\F_{q^{2s}}^n$:
    $$\mathcal{H}_\Delta^P:=\gen*{\ev_P(X^e) \mid e\in\Delta}=\ev_P(\gen*{X^e  \mid e\in \Delta})\subseteq \F_{q^{2s}}^n.$$
Note that $X^e$ also represents $X^e + I$. The subfield-subcode over $\F_{q^{2}}$
\[
\Su_\Delta^P := \mathcal{H}_\Delta^P \bigcap \F_{q^2}^n
\]
is named the {\it homothetic-BCH} $q^2$-ary code given by $\Delta$.
\end{mydef}

\begin{remark}
{\rm
The length of the code $\mathcal{H}_\Delta^P$ is $n=\lambda n_1$. We imposed the condition $\lambda n_1 \nmid q^{2s}-1$ so that the length $n$ cannot be obtained with a univariate $\{1\}$-affine variety code because these codes have lengths $n'$ satisfying $n' \mid q^{2s}-1$.

When $\Delta \subseteq E(n_1)$, the projected set $\mathrm{pr}_{\{1, \ldots, n_1\}} \mathcal{H}_\Delta^P$  of $\mathcal{H}_\Delta^P$ to the first $n_1$ coordinates coincides with the code $\Co_{\Delta}^{U(n_1)}$ and, then, $\mathcal{H}_\Delta^P$ can be seen as an enlargement of $\Co_{\Delta}^{U(n_1)}$. %In addition, when $\Delta \nsubseteq E(n_1)$, but $\Delta \subseteq E(\lambda n_1)$, the projection $\mathrm{pr}_{\{1, \ldots, n_1\}} \mathcal{H}_\Delta^P$ is a $\{1\}$-affine variety code.
}
\end{remark}

With the above notation, consider the set $$\mathcal{A}(q^{2s}-1)=\conj*{a_0(q^{2s}-1)=0<a_1(q^{2s}-1)<\cdots<a_{\nu}(q^{2s}-1)}$$ and fix a index $\tau \leq \nu$. Define
\begin{equation}
\label{eldelta}
\Delta_1(\tau) : = \Delta^{U(q^{2s}-1)}(1,\tau) \mbox{  and  } \Delta_0(\tau) : = \Delta^{U(q^{2s}-1)}(0,\tau).
\end{equation}
Then, reasoning as in \cite[Theorem 13]{Traza}, we get the next result.

\begin{thm}\label{bounds}
The following bounds hold on the dimension $\dim$ and the minimum distance $\dis$ of the Hermitian dual of the homothetic-BCH codes $\Su_{\Delta_1(\tau)}^P$ and $\Su_{\Delta_0(\tau)}^P$.
\begin{enumerate}
    \item $\dim \paren*{\Su_{\Delta_1(\tau)}^P} \leq \sum_{\ell=1}^\tau \# \Lambda_{a_\ell}$;
    $\; \;\dim \paren*{\Su_{\Delta_0(\tau)}^P} \leq \sum_{\ell=1}^\tau \# \Lambda_{a_\ell} + 1$.
    \item $\dis\paren*{\paren*{\Su_{\Delta_1(\tau)}^P}^{\perp_h}} \geq a_{\tau+1}$;
    $\; \; \dis\paren*{\paren*{\Su_{\Delta_0(\tau)}^P}^{\perp_h}} \geq a_{\tau+1} + 1$.
\end{enumerate}
\end{thm}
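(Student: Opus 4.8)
The plan is to establish both bounds by relating the homothetic-BCH code $\Su^P_{\Delta_i(\tau)}$ to the BCH code $\mathrm{BCH}^{n_1}(i,\tau)$ obtained by puncturing in the last $\lambda n_1 - n_1$ coordinates, exactly as suggested by the enlargement structure recalled in the preceding remark. Concretely, since $\Delta_i(\tau) \subseteq E(q^{2s}-1)$ consists of the cyclotomic cosets $\Lambda_{a_0}, \ldots, \Lambda_{a_\tau}$, the homothetic evaluation code $\mathcal{H}^P_{\Delta_i(\tau)}$ is spanned by the evaluations $\ev_P(X^e)$ with $e$ in these cosets, and its projection to the first $n_1$ coordinates is the $\{1\}$-affine variety code $\Co^{U(n_1)}_{\Delta'_i(\tau)}$ whose subfield-subcode is the narrow-sense BCH code. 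First I would invoke Proposition~\ref{BCH} to get the dimension and dual-distance statements for $\mathrm{BCH}^{n_1}$, and then transfer them to the homothetic setting.

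\textbf{Dimension bound.} For the upper bound on $\dim$, I would argue that the number of $\F_{q^2}$-linearly independent codewords in a subfield-subcode $\mathcal{H}^P_\Delta \cap \F_{q^2}^n$ is controlled by the defining set $\Delta$ together with the trace/cyclotomic structure. The key observation is that the dimension of the subfield-subcode is bounded above by the cardinality of the $q^2$-closure of $\Delta$, which here is precisely $\#\Delta_i(\tau) = \sum_{\ell} \#\Lambda_{a_\ell}$ by construction. The reasoning in \cite[Theorem 13]{Traza} gives this bound directly once one checks that passing from $U(n_1)$ to $P$ via the homothety-and-scaling structure does not enlarge the relevant closed defining set; the extra $+1$ in the $\Delta_0$ case accounts for the coset $\Lambda_{a_0} = \{0\}$, which contributes the all-ones-type evaluation and hence a single additional dimension. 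I expect this part to reduce to a clean citation once the defining-set bookkeeping is set up.

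\textbf{Dual minimum-distance bound.} For the minimum distance of the Hermitian dual, the strategy is to use the standard BCH-type argument that a code whose defining set contains a run of consecutive cyclotomic cosets $\Lambda_{a_1}, \ldots, \Lambda_{a_\tau}$ has a dual distance governed by the first missing representative $a_{\tau+1}$. The point is that $\paren*{\Su^P_{\Delta_i(\tau)}}^{\perp_h}$ is, up to the Hermitian-duality bookkeeping, a code containing all evaluations $\ev_P(X^e)$ for $e$ in a consecutive range up to $a_{\tau+1}-1$; such a code contains a generalized-Reed--Solomon-like subcode and a BCH bound applies to force $\dis \geq a_{\tau+1}$ (respectively $a_{\tau+1}+1$ when the coset $\{0\}$ is included, which shifts the bound by one). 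I would make this precise by transferring the dual-distance estimate of Proposition~\ref{BCH} through the projection map, using that puncturing can only decrease distance, so the distance of the unpunctured (homothetic) dual is at least that of the punctured BCH dual.

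\textbf{Main obstacle.} The delicate step is verifying that the Hermitian-duality and subfield-subcode operations commute correctly with the enlargement/puncturing that defines the homothetic code, so that the BCH estimates genuinely transfer to the larger code without loss. In particular one must confirm that the defining set of $\paren*{\Su^P_{\Delta_i(\tau)}}^{\perp_h}$, computed with respect to the Hermitian form on $\F_{q^2}^n$ and the cyclotomic structure modulo $q^2$, still contains the required consecutive range up to $a_{\tau+1}-1$; this is where the careful ordering of $P$ and the compatibility of the $q^2$-cyclotomic cosets on $E(q^{2s}-1)$ with their reductions on $E(n_1)$ must be used. Once that compatibility is checked, both items follow by the same reasoning as in \cite[Theorem 13]{Traza}, so I would organize the proof as: set up the defining-set/closure dictionary, cite the dimension bound, then deduce the dual-distance bound via the projection and the classical BCH argument.
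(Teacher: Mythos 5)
Your dimension bound is essentially correct and matches the paper's route: the paper proves Theorem~\ref{bounds} in one line by ``reasoning as in \cite[Theorem 13]{Traza}'', and for item (1) this amounts to your observation that $\dim_{\F_{q^2}}\bigl(\Su_{\Delta_i(\tau)}^P\bigr) \leq \dim_{\F_{q^{2s}}}\bigl(\mathcal{H}_{\Delta_i(\tau)}^P\bigr) \leq \#\Delta_i(\tau)$, with $\#\Lambda_{a_0}=1$ accounting for the extra $+1$ in the $\Delta_0$ case. That half is fine.

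The dual-distance half has a genuine gap. You transfer the estimate of Proposition~\ref{BCH} from the punctured code $\mathrm{BCH}^{n_1}$ to the long code by asserting that ``puncturing can only decrease distance, so the distance of the unpunctured (homothetic) dual is at least that of the punctured BCH dual'', but duality reverses this inequality. The dual of a projection is the \emph{shortening} of the dual: if $\mathcal{B}=\mathrm{pr}_{\{1,\dots,n_1\}}(\mathcal{C})$, then $\mathcal{B}^{\perp_h}=\mathrm{short}(\mathcal{C}^{\perp_h})$, whose words are full-length dual words vanishing on the last coordinates, so $\dis(\mathcal{B}^{\perp_h})\geq \dis(\mathcal{C}^{\perp_h})$ --- the opposite of what you need. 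Concretely, for $\mathcal{C}=\langle(1,1,0)\rangle\subseteq\F_2^3$ and $\mathcal{B}=\langle(1,1)\rangle$ one has $\dis(\mathcal{B}^{\perp})=2$ while $\dis(\mathcal{C}^{\perp})=1$; in the homothetic setting nothing in your argument rules out low-weight words of $\bigl(\Su_{\Delta_1(\tau)}^P\bigr)^{\perp_h}$ supported on the last $(\lambda-1)n_1$ coordinates. There are two further mismatches: (i) even if the transfer worked, Proposition~\ref{BCH} at length $n_1$ would yield the bound $a'_{\tau'+1}$ (next representative modulo $n_1$), not the claimed $a_{\tau+1}$ (next representative modulo $q^{2s}-1$), and these differ in general; (ii) saying the dual ``contains a generalized-Reed--Solomon-like subcode'' can only \emph{upper}-bound its distance, since a code containing a given subcode has distance at most the weight of any word of that subcode --- the BCH argument needs the reverse containment, namely that the dual sits \emph{inside} (the trace of) an evaluation code of polynomials whose support avoids a consecutive run. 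The paper's route, adapting \cite[Theorem 13]{Traza}, works directly at length $\lambda n_1$: since $\Delta_1(\tau)$ contains every exponent $1,\dots,a_{\tau+1}-1$ modulo $q^{2s}-1$, Delsarte/trace duality expresses $\bigl(\Su_{\Delta_1(\tau)}^P\bigr)^{\perp_h}$ through evaluations at $P$ of polynomials with correspondingly constrained support, and a degree/root count at the $n=\lambda n_1$ points of $P$ gives $\dis\geq a_{\tau+1}$ (respectively $a_{\tau+1}+1$ for $\Delta_0(\tau)$). Your projection-based transfer cannot substitute for that step.
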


\subsection{Hermitian self-orthogonality of homothetic-BCH codes}
We devote this subsection to study when a code of the type $\Su_{\Delta_1(\tau)}^P$ is self-orthogonal with respect to the Hermitian inner product.

%We start by considering the code $\Co_{\Delta_0(1)}^{U(q^{2s}-1)}$  and the subfield-subcode $\Su_{\Delta_0(1)}^{U(q^{2s}-1)} := \Co_{\Delta_0(1)}^{U(q^{2s}-1)} \cap \F_{q^2}^{q^{2s}-1}$.

Consider the map
$$
\mathcal{T} \colon \faktor{\F_{q^{2s}}[X]}{I} \to \faktor{\F_{q^{2s}}[X]}{I}, \quad \mathcal{T}(f)=f+f^{q^2}+\cdots+f^{q^{2(s-1)}}.
$$
Reasoning as in \cite[Propositions 4 and 5]{GH2015}, a codeword $\bs{c}=\ev_{P}(f)\in \mathcal{H}_{\Delta_1(\tau)}^{P}$, with $f \in \faktor{\F_{q^{2s}}[X]}{I}$, satisfies $\bs{c} \in \Su_{\Delta_1(\tau)}^{P}$ if  $f=\mathcal{T}(g)$ for some $g\in \faktor{\F_{q^{2s}}[X]}{I}$.

To decide about self-orthogonality of $\Su_{\Delta_1(\tau)}^P$, it suffices to study the following inner product
$$\ev_P(X^e )\cdot_h \ev_P(X^{e'})$$
for $e,\,e' \in \conj*{0,1,\dots, q^{2s}-1}$.

Let us state our main result in this section.

\begin{thm}\label{self}
Keep the above notation and consider a set $\Delta_1(\tau)$ as defined above Theorem \ref{bounds}. Define $\Delta'_1(\tau')$ the set formed by the union of the cyclotomic cosets in $E(n_1)$, with respect to $q^2$, obtained by reducing the elements in  $\Delta_1(\tau)$  modulo $n_1$. That is
\begin{equation}
\label{eldeltap}
\Delta'_1(\tau')=\Lambda'_{a_1'} \cup \cdots \cup \Lambda'_{a'_{\tau'}}\subseteq \conj*{0,1,\dots,n_1-1}.
\end{equation}
Consider the equation
\begin{equation}\label{eqst}
     qx + q^{2k} y = \beta n_1, \; \; 0 \leq k \leq s-1, 0 < \beta \in \mathbb{Z},
\end{equation}
where $x$ and $y$ are integer variables such that $0 \leq x, y \leq n_1 -1,$ representing elements in $\frac{\mathbb{Z}}{n_1\mathbb{Z}} = E(n_1)$.

Then, if
\begin{equation}
\label{cota}
a'_{\tau'} \leq \min \left\{ \max \left\{ x,y \right\} \; | \; (x,y) \mbox{ is a solution of (\ref{eqst})} \right\} - 1,
\end{equation}
the $\{1\}$-affine variety code $\Co_{\Delta'_1(\tau')}^{U(n_1)}$ and the BCH code of length $n_1$ over $\mathbb{F}_{q^2}$,  $\mathrm{BCH}^{n_1}(1,\tau')=\Co_{\Delta'_1 (\tau')}^{U(n_1)} \cap \mathbb{F}_{q^2}^{n_1}$, are Hermitian self-orthogonal and, as a consequence,  $\mathcal{H}_{\Delta_1(\tau)}^P$ and $\Su_{\Delta_1(\tau)}^P$ are also Hermitian self-orthogonal codes.

\end{thm}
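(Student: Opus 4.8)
The goal is to establish Hermitian self-orthogonality at two levels—the univariate $\{1\}$-affine variety code $\Co_{\Delta'_1(\tau')}^{U(n_1)}$ (and its subfield-subcode $\mathrm{BCH}^{n_1}(1,\tau')$) of length $n_1$, and the enlarged homothetic codes $\mathcal{H}_{\Delta_1(\tau)}^P$ and $\Su_{\Delta_1(\tau)}^P$ of length $\lambda n_1$—and to connect them. The natural strategy is to reduce everything to computing the basic Hermitian products $\ev_P(X^e)\cdot_h\ev_P(X^{e'})$ and their length-$n_1$ analogues, and to show these vanish precisely under Condition (\ref{cota}). I would organize the argument in three stages: first the length-$n_1$ computation, then the transfer to length $\lambda n_1$, and finally the passage to subfield-subcodes via the map $\mathcal{T}$.

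\emph{Stage 1 (length $n_1$).} Using that $U(n_1)=\{1,\zeta_{n_1},\dots,\zeta_{n_1}^{n_1-1}\}$ is the group of $n_1$-th roots of unity, I would compute
\[
\ev_{U(n_1)}(X^e)\cdot_h \ev_{U(n_1)}(X^{e'})
=\sum_{j=0}^{n_1-1}\zeta_{n_1}^{je}\bigl(\zeta_{n_1}^{je'}\bigr)^{q}
=\sum_{j=0}^{n_1-1}\zeta_{n_1}^{j(e+qe')},
\]
which by the standard character-sum identity equals $n_1$ (i.e. is nonzero in $\F_{q^{2s}}$, since $p\nmid n_1$) exactly when $e+qe'\equiv 0 \pmod{n_1}$, and vanishes otherwise. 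For $\Co_{\Delta'_1(\tau')}^{U(n_1)}$ to be Hermitian self-orthogonal I need $e+qe'\not\equiv 0\pmod{n_1}$ for all $e,e'\in\Delta'_1(\tau')$. Because $\Delta'_1(\tau')$ is a union of $q^2$-cyclotomic cosets, each such $e'$ is of the form $q^{2k}y$ for some representative $y$ and some $k$, and similarly $e=x$; a dangerous product $e+qe'\equiv 0$ becomes $qx+q^{2k}y\equiv 0\pmod{n_1}$, i.e. exactly Equation (\ref{eqst}) once I write the congruence as an equality $qx+q^{2k}y=\beta n_1$ with $\beta>0$. The bound (\ref{cota}) says the largest representative $a'_{\tau'}$ is strictly smaller than $\max\{x,y\}$ for every solution $(x,y)$ of (\ref{eqst}); hence no pair of exponents drawn from $\Delta'_1(\tau')$ (all of which are $\le a'_{\tau'}$, up to the cyclotomic orbit) can satisfy the vanishing congruence, giving self-orthogonality. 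Self-orthogonality of $\mathrm{BCH}^{n_1}(1,\tau')$ then follows because a subfield-subcode is contained in the ambient code and inherits orthogonality with respect to $\cdot_h$.

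\emph{Stage 2 (transfer to length $\lambda n_1$).} The point set $P$ has the homothetic block structure $P=\bigcup_{r=0}^{\lambda-1}\gamma^{rn_1}\,U(n_1)$, i.e. it is $\lambda$ scaled copies of the $n_1$-th roots of unity. I would compute the full product by summing over these blocks:
\[
\ev_P(X^e)\cdot_h \ev_P(X^{e'})
=\sum_{r=0}^{\lambda-1}\gamma^{rn_1(e+qe')}\,\ev_{U(n_1)}(X^e)\cdot_h\ev_{U(n_1)}(X^{e'}).
\]
The inner factor already vanishes whenever $e+qe'\not\equiv 0\pmod{n_1}$ by Stage 1, so each block contributes $0$ and the whole product is $0$. (When $e+qe'\equiv 0\pmod{n_1}$ the geometric sum over $r$ could in principle be nonzero, but those exponent pairs are exactly the ones excluded by (\ref{cota}), so they never arise for exponents in $\Delta_1(\tau)$.) This shows $\mathcal{H}_{\Delta_1(\tau)}^P$ is Hermitian self-orthogonal, and $\Su_{\Delta_1(\tau)}^P=\mathcal{H}_{\Delta_1(\tau)}^P\cap\F_{q^2}^n$ inherits this immediately.

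\emph{Main obstacle.} The routine character-sum computations are easy; the genuine subtlety is the bookkeeping between the reduced exponents in $\Delta'_1(\tau')\subseteq E(n_1)$ and the original exponents in $\Delta_1(\tau)\subseteq E(q^{2s}-1)$, together with verifying that the vanishing condition $e+qe'\equiv 0\pmod{n_1}$ is faithfully captured by Equation (\ref{eqst}) for \emph{every} choice of cyclotomic representative and every shift $q^{2k}$. I must check that reducing modulo $n_1$ commutes correctly with the cyclotomic-coset structure (so that $\Delta'_1(\tau')$ is indeed a union of $q^2$-cosets and the maximal representative $a'_{\tau'}$ behaves as claimed), and that the exponent ranges $0\le x,y\le n_1-1$ and $0\le k\le s-1$ exhaust all relevant products without double-counting; this is where care is needed, rather than in the algebra itself. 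Once this correspondence is pinned down, the three stages assemble directly into the stated result.
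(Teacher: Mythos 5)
Your proposal follows essentially the same route as the paper's proof: factor the Hermitian product over the $\lambda$ homothetic blocks so that it is a scalar multiple of $\sum_{i=0}^{n_1-1}\zeta_{n_1}^{i(e+qe')}$, reduce its vanishing to the congruence $e+qe'\equiv 0 \pmod{n_1}$, translate that congruence into Equation (\ref{eqst}) via cyclotomic-coset representatives so that (\ref{cota}) rules it out, and pass to the subfield-subcodes by containment. Two harmless slips to fix when writing it up: the blocks are $\gamma^{r}U(n_1)$, not $\gamma^{rn_1}U(n_1)$ (so the block factor is $\gamma^{r(e+qe')}$), and a general exponent is $q^{2t}x$ rather than its representative $x$, so the ``main obstacle'' you flag is resolved exactly as in the paper by multiplying the congruence by suitable powers of $q$ (invertible modulo $n_1$, since $q^{2s}\equiv 1$) to normalize to $t_2\leq t_1$ and $0\leq k\leq s-1$.
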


\begin{proof}
As we said before the statement, to get Hermitian self-orthogonality of the code $\Su_{\Delta_1(\tau)}^P$, it suffices to check that for any pair of exponents $e,\,e' \in \conj*{0,1,\dots, q^{2s}-1}$ of monomials to be evaluated to get $\mathcal{H}_{\Delta_1(\tau)}^{P}$, it holds that
$$\ev_P(X^e )\cdot_h \ev_P(X^{e'} ) = 0.$$
Since $P$ is the zero set of $I$, we are interested in the following chain of equalities:
\begin{align}\label{advsoc}
        \ev_P(X^e)\cdot_h \ev_P(X^{e'})&=\sum_{i=0}^{n_1-1} \zeta_{n_1}^{i(e+qe')}+\gamma^{e+qe'}\sum_{i=0}^{n_1-1} \zeta_{n_1}^{i(e+qe')}+\cdots+\gamma^{(\lambda-1)(e+qe')}\sum_{i=0}^{n_1-1} \zeta_{n_1}^{i(e+qe')} \nonumber \\
        &=\paren*{\sum_{i=0}^{n_1-1} \zeta_{n_1}^{i(e+qe')}} \paren*{1+\gamma^{e+qe'}+\cdots+\gamma^{(\lambda-1)(e+qe')}}.
    \end{align}

The chain of equalities (\ref{advsoc}) proves that if the factor $\sum_{i=0}^{n_1-1} \zeta_{n_1}^{i(e+qe')}$ vanishes, then we get the desired self-orthogonality. It is clear from our construction that if the mentioned code $\Co_{\Delta'_1 (\tau')}^{U(n_1)} \cap \mathbb{F}_{q^2}^{n_1}$ is Hermitian self-orthogonal, then the above factor is zero.

It remains to describe the role of equation (\ref{eqst}) and the fact that self-orthogonality holds whenever (\ref{cota}) happens. We have that
	\begin{equation*}
		\sum_{i=0}^{n_1-1} \zeta_{n_1}^{i(e+qe')}=
		\begin{cases}
			0, & \text{if } e+qe'\nequiv 0 \mod n_1,\\
			\neq 0 & \text{if } e+qe' \equiv 0 \mod n_1.
		\end{cases}
	\end{equation*}
	
{\it Notice that for $e=e'=0$, the above sum does not vanish and this is the reason we do not include $0$ in $\Delta_1(\tau)$.}
%However, when $p\mid \lambda$ and $e=e'=0$ the second factor of \eqref{advsoc} vanishes, and we could consider $0$ to belong to $\Delta$.

We desire to study when
\begin{equation}\label{geq}
e+qe'=\alpha n_1
\end{equation}
for some $\alpha\in \Z$, $\alpha>0$. In this stage, we are considering evaluation only in the $n_1$ first coordinates and, thus, we can suppose that $e$ and $e'$ belong to certain cyclotomic cosets  modulo $n_1$ with respect to $q^2$. Then, set
$$e=j_0q^{2t_1}, \quad e'=i_0q^{2t_2}, \quad t_1,\,t_2\in \N, \quad t_2\leq t_1\leq s-1,$$
where $i_0$ and $j_0$ denote the representatives in $\mathcal{A}(n_1)$ of the cyclotomic cosets to which $e$ and $e'$ belong, respectively.

Thus, we get the following equality
\begin{equation}
\label{TTT}
e + qe' = qe' + e = qi_0q^{2t_2}+j_0q^{2t_1}=q^{2t_2}\paren*{qi_0+q^{2(t_1-t_2)}j_0}=\alpha n_1.
\end{equation}
Since $n_1 \nmid q^{2t_2}$, the above equation is equivalent to
\begin{equation}
\label{la7}
    qi_0+q^{2k}j_0=\beta n_1
\end{equation}
with $k=t_1-t_2\in \N$ ($0\leq k\leq s-1$) for some $\beta\in \Z$, $\beta>0$.
Notice that we can assume $t_2\leq t_1$ because otherwise, when multiplying Equality (\ref{TTT}) by $q$, we would obtain $q^2i_0q^{2t_2}+qj_0q^{2t_1}=\alpha'n_1$. Then, it holds $i_0q^{2(t_2+1)}+qj_0q^{2t_1}=\alpha'n_1$, with arbitrary $t_2+1>t_1$, getting an analogous situation.

Therefore,
\begin{equation}
\label{MM}
L:= \min\conj*{\max\conj*{i_0,\,j_0} \mid (i_0,j_0)\in \mathcal{A}(n_1) \times \mathcal{A}(n_1)  \text{ is a solution of \eqref{eqst}}}-1
\end{equation}
is the bound we look for, because $L+1$ is  the first representative in $\mathcal{A}(n_1)$ we must avoid to get self-orthogonality. It means that {\it the bound is sharp, in the sense that if $a'_{\tau'}$ exceeds the bound, the corresponding BCH code will not be Hermitian self-orthogonal.}

Next we are going to prove that in order to obtain our bound, we do not need to restrict our computations to elements in $\mathcal{A}(n_1) \times \mathcal{A}(n_1)$ but it suffices to consider what we indicated in (\ref{cota}). Multiplying Equation \eqref{la7} by $q^2$, we get an equivalent equation. Recall that the solutions are elements in $\frac{\mathbb{Z}}{n_1\mathbb{Z}}$ and, therefore, the solutions shift within the same cyclotomic coset. This allows us to study Equation \eqref{eqst} by looking for solutions in $E(n_1)$, no matter if they are representatives of their cyclotomic cosets, because in our set of solutions, the solution made of representatives will also appear and it will determine the bound.  Indeed, if $(i,j)$ is a solution where $i \neq i_0$ and $j \neq j_0$, then $\max\{i,j\} > \max \{i_0,j_0\}$. In addition, in case that $(i_0,j)$ is a solution and $j$ is not the representative of its cyclotomic coset, then $(i_0,j)=(i_0,j_0q^{2t'})$ and $\max\{i_0,j_0q^{2t'}\} \geq \max \{i_0,j_0\}$. Analogously if $(i,j_0)$ is a solution and $i$ is not the representative of its cyclotomic coset, then $(i,j_0)=(i_0q^{2t'},j_0)$ and $\max\{i_0q^{2t'},j_0\} \geq  \max \{i_0,j_0\}$.

Therefore, the quantity
\begin{equation}
\label{CondZ}
\min\conj*{\max\conj*{x,\,y} \mid (x,y) \text{ is a solution of \eqref{eqst}}}-1,
\end{equation}
computed for general $(x,\,y) \in E(n_1) \times E(n_1)$, coincides with the sharp upper bound $L$ before given for the value $a'_{\tau'}$ we are interested in.
This concludes the proof.

%That is, if $\{1,\dots,L\}\subseteq \Delta'$ and
%$$L\leq \min\conj*{\max\conj*{x,\,y} \mid (x,y) \text{ is a solution of \eqref{eqst}}}-1$$
%then $\Co_{\Delta'}^{U(n_1)}$ is Hermitian self-orthogonal.

\end{proof}

%\begin{remark}
%\label{onBCH}
%\end{remark}

\begin{remark}
{\rm
In Theorem \ref{self} and its proof, we used the fact that the code  $\mathcal{H}_{\Delta_1(\tau)}^P$ introduced before Theorem \ref{bounds} can be regarded as a punctured code of the code $\Co_{\Delta_1 (\tau)}^{U(q^{2s}-1)}$. Also $\Co_{\Delta'_1 (\tau')}^{U(n_1)}$ with $\Delta'_1 (\tau')$ as in (2) of Theorem \ref{self} is a punctured code of these last two  codes. Moreover, when $\Co_{\Delta'_1 (\tau')}^{U(n_1)}$ is Hermitian self-orthogonal, so is $\Su_{\Delta_1(\tau)}^P$.

Theorem \ref{self} implies that giving conditions for Hermitian self-orthogonality of BCH codes (as introduced in Definition \ref{AA}) implies giving conditions for Hermitian self-orthogonality of homothetic-BCH codes. In Section \ref{Sect4} we will take advantage of this fact to enlarge BCH codes and to get bounds for its dimension and minimum distance by
Theorem \ref{bounds}. This way we will obtain quantum stabilizer codes better than those appearing in the literature. Finally, we note that, as we mentioned in the proof of Theorem \ref{self}, we consider sets $\Delta_1(\tau)$, but if we add the condition $p | \lambda$, sets $\Delta_0(\tau)$ can also be used, obtaining one unit more in our bounds for dimension and distance. The specific main result will be stated in the forthcoming Theorem \ref{mainth}.
}
\end{remark}

\section{Hermitian self-orthogonality of narrow-sense BCH codes}
\label{HBCH}
Theorem \ref{self} proves that one has a Hermitian self-orthogonal homothetic-BCH code $\Su_{\Delta_1(\tau)}^P$ when
%the $\{1\}$-affine variety code $\Co_{\Delta'_1(\tau')}^{U(n_1)}$  and the associated
the BCH code $\Co_{\Delta'_1 (\tau')}^{U(n_1)} \cap \mathbb{F}_{q^2}^{n_1}$ is Hermitian self-orthogonal.

Thus, we only need to provide conditions for Hermitian self-orthogonality of codes $\mathrm{BCH}^{n_1}(1,\tau') = \Co_{\Delta'_1 (\tau')}^{U(n_1)} \cap \mathbb{F}_{q^2}^{n_1}$ where
$$
\Delta'_1(\tau')=\Lambda'_{a_1'} \cup \cdots \cup \Lambda'_{a'_{\tau'}}.
$$
With independence of the case of homothetic-BCH codes, if one considers a Hermitian self-orthogonal code $\mathcal{C} = \mathrm{BCH}^{n_1}(1,\tau')$, by Theorem \ref{col:stabherm}, the length and dimension of $\mathcal{C}$ and the minimum distance of its Hermitian dual  give the parameters of the attached quantum stabilizer code. An analogous comparable situation is studied in a number of papers \cite{Aly, LMFL2013,KZT2013,lag2,HZC2015,XLGM2016,LLLM2017,YZKL2017,QZ2017,LLLG2017,ZSL2018,SYW2019,LLGW2019,LS2021,XL2021,ZZ2021,L2022}, where Hermitian dual containment is considered for BCH codes seen as cyclic codes.

In this section we take advantage of the fact that (\ref{cota}) determines a sharp bound to get  Hermitian self-orthogonal narrow-sense BCH codes. This allows us to find bounds on $a'_{\tau'}$ implying Hermitian self-orthogonality for several families of BCH codes which, in many cases, improve the bounds on the designed distance providing quantum stabilizer codes.

We remind that $q$ is a prime power, $s \geq 2 $ is an integer and suppose that $n_1$ is a positive integer which divides $q^{2s} - 1 $. We study four cases of BCH codes determined by the factorization of their lengths $n_1$.

\begin{itemize}
\item Case 1: $n_1=(q^s+1)n_2$, $n_2\mid q^s-1$, $s$ even.
\item Case 2: $n_1=(q^s+1)n_2$, $n_2\mid q^s-1$, $s$ odd.
\item Case 3: $n_1=(q^s-1)(q^a+1)$, $s> a$, $\frac{s+a}{2}$ odd.
\item Case 4: $n_1=(q^s-1)(q^a+1)$, $s> a$, $\frac{s+a}{2}$ even, with the exception of the case when $q=2$ and $a=s-2$.
\end{itemize}

The following four propositions consider each one of the previous cases and provide the announced bound on $a'_{\tau'}$. All the proofs are supported in the fact proved in Theorem \ref{self} that {\it the bound given in (\ref{cota}), see (\ref{CondZ}), is a sharp bound for Hermitian self-orthogonality of BCH codes.} Note that sharp means that if the bound is exceed, Hermitian self-orthogonality cannot hold.

\begin{prop}
\label{Case1}
Keep the above notation and consider a BCH code $\mathcal{C}=\mathrm{BCH}^{n_1}(1,\tau')$ whose length $n_1$ satisfies the conditions given in Case 1. Then, $\mathcal{C}$ is Hermitian self-orthogonal if and only if
$$a_{\tau'}' \leq qn_2- \min\left\{ \floor*{\frac{(q-1)n_2}{q^{s-1}+1}},  \floor*{\frac{(q-1)n_2-1}{q^{s-1}}} \right\}-1.$$
\end{prop}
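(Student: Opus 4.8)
The plan is to apply the sharp criterion from Theorem \ref{self} directly. By that theorem, $\mathcal{C} = \mathrm{BCH}^{n_1}(1,\tau')$ is Hermitian self-orthogonal if and only if $a'_{\tau'} \leq L$, where $L$ is the quantity defined in (\ref{CondZ}), namely $L = \min\{\max\{x,y\} \mid (x,y) \text{ solves } qx + q^{2k}y = \beta n_1,\ 0 \leq k \leq s-1,\ \beta > 0,\ 0 \leq x,y \leq n_1-1\} - 1$. So the entire content of the proposition reduces to computing this combinatorial minimum for the specific length $n_1 = (q^s+1)n_2$ with $n_2 \mid q^s-1$ and $s$ even. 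First I would substitute $n_1 = (q^s+1)n_2$ into equation (\ref{eqst}) and analyze, for each admissible value of $k$, the smallest value of $\max\{x,y\}$ over all positive solutions $(x,y)$ in the box $[0,n_1-1]^2$.

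The key structural observation I would exploit is that $s$ even makes $q^s \equiv 1 \pmod{q+1}$ and, more importantly, it forces a favorable interaction between the exponent $q^{2k}$ and the factor $q^s+1$ dividing $n_1$. I would look for the solution minimizing $\max\{x,y\}$ by trying to balance $x$ and $y$ against the target $\beta n_1 = \beta(q^s+1)n_2$. The natural candidate is to take $\beta = q$ (or a small multiple), giving $qx + q^{2k}y = q(q^s+1)n_2$, and then solve for the pair that keeps both coordinates as small as possible. I expect the extremal solution to arise from a specific choice of $k$ (likely $k = \frac{s}{2}$ or a neighbouring value, since $q^{2k} = q^s$ then pairs naturally with $q^s+1$) together with the two competing near-optimal values $\floor*{\frac{(q-1)n_2}{q^{s-1}+1}}$ and $\floor*{\frac{(q-1)n_2-1}{q^{s-1}}}$ appearing in the statement; the $\min$ of the two floor terms signals that two distinct solution families compete and one must take whichever yields the smaller $\max\{x,y\}$.

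Concretely, I would carry out the computation in three steps. First, produce an explicit candidate solution $(x,y)$ of (\ref{eqst}) with $\max\{x,y\}$ equal to $qn_2 - \min\{\floor*{\frac{(q-1)n_2}{q^{s-1}+1}}, \floor*{\frac{(q-1)n_2-1}{q^{s-1}}}\}$, thereby establishing the upper bound $L \leq qn_2 - \min\{\cdots\} - 1$ and hence the ``only if'' direction (the code fails to be self-orthogonal once $a'_{\tau'}$ exceeds this). Second, prove the matching lower bound: show that no positive solution of (\ref{eqst}) can have $\max\{x,y\}$ strictly smaller than this value, for any admissible $k$. This amounts to a case analysis over $k$, bounding $\max\{x,y\}$ from below using the size of $n_1$ relative to $q^s$ and the constraint $x,y \leq n_1 - 1$, and checking that the two floor expressions genuinely capture the two boundary regimes. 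Third, invoke Theorem \ref{self} to translate the computed value of $L$ into the self-orthogonality equivalence as stated.

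The main obstacle I anticipate is the lower-bound (sharpness) argument in the second step: ruling out \emph{all} solutions across every $k \in \{0,\dots,s-1\}$ and every $\beta > 0$ that might beat the claimed minimum. The difficulty is that the minimization is over a two-dimensional lattice of solutions intersected with a box, so one must argue that the floor-expression candidate is genuinely optimal rather than merely one feasible point. I expect the competition between the two floor terms $\floor*{\frac{(q-1)n_2}{q^{s-1}+1}}$ and $\floor*{\frac{(q-1)n_2-1}{q^{s-1}}}$ to be the delicate part, since it reflects two nearby but distinct solution branches, and getting the floors and the off-by-one adjustments exactly right (so that the bound is sharp and not merely correct up to a constant) will require careful bookkeeping with the divisibility $n_2 \mid q^s - 1$.
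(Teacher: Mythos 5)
Your strategy coincides with the paper's: by the sharpness of the bound in Theorem \ref{self} (see (\ref{CondZ})), self-orthogonality of $\mathrm{BCH}^{n_1}(1,\tau')$ is equivalent to $a'_{\tau'} \leq L$, so the proposition reduces to evaluating $L$ for $n_1=(q^s+1)n_2$; even your guess that the extremal solutions of (\ref{eqst}) occur for $k=\tfrac{s}{2}$ and $\beta=q$ is what actually happens. The problem is that beyond this reduction your text is a plan, not a proof: each of the three steps you list is introduced with ``I would'' or ``I expect'' and none is executed, and the step you yourself single out as the main obstacle (ruling out, for every $k$ and $\beta$, solutions beating the claimed minimum) is exactly where the paper spends nearly all of its effort.

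Concretely, three things are missing or off. First, no candidate solution is ever written down: the paper exhibits $(x,y)=(n_2,\,qn_2)$ (with $k=\tfrac{s}{2}$, $\beta=q$) and, crucially, the one-parameter family $(n_2+q^{s-1}v,\; qn_2-v)$, $v\geq 1$, which is what actually produces the stated bound. Second, the exhaustion over $k$ is absent: the paper's core argument splits $2k+1$ into the ranges $2k+1=s-w$ (treating $s=2$ and $s\neq 2$ separately), $2k+1=s+1$, and $2k+1=s+w$ with $w\geq 3$ odd, and shows by range and $q$-adic arguments that only $2k+1=s+1$ contributes solutions competing with $(n_2,qn_2)$; without this, no lower bound on $L$ is established and the ``if'' direction of the equivalence is unproved. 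Third, your reading of the minimum of the two floor terms as a competition between ``two distinct solution families'' is inaccurate: both floors constrain the single family above, $\floor*{\frac{(q-1)n_2}{q^{s-1}+1}}$ being the threshold below which $\max\{x,y\}=qn_2-v$, and $\floor*{\frac{(q-1)n_2-1}{q^{s-1}}}$ coming from the range requirement $q^{s-1}v<(q-1)n_2$. Moreover, one must still show that in the complementary regime $qn_2-v<n_2+q^{s-1}v$ no solution does better; the paper settles this by proving that the interval in (\ref{HHH}) contains no integer, and that delicate verification---which you correctly anticipate as the hard part---is precisely what your proposal never supplies.
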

\begin{proof}
The pair $(e=q^{s+1}n_2, e'=n_2)$ is a solution of \eqref{geq} for $\alpha=q$ and thus $(i=n_2,j=qn_2)$ is a solution of \eqref{eqst} for $\beta=q$ and $k=\frac{s}{2}$. Then, it suffices to study whether there exists another solution $(i',j')$ of \eqref{eqst} such that $\max\conj*{i',j'}<qn_2$. Let us assume it exists, then it can be written as
\begin{equation*}
\begin{cases}
i'=n_2+u\\
j'=qn_2-v
\end{cases}
\text{ with } u,\,v\in\Z \text{ such that }
\begin{cases}
-n_2<u<(q-1)n_2\\
0<v<qn_2.
\end{cases}
\end{equation*}
We use an strict inequality since we cannot pick the monomial $X^0$. Then, by \eqref{eqst}, one gets
$$q(n_2+u)+q^{2k}(qn_2-v)=\beta q^sn_2+ \beta n_2,$$
or equivalently
\begin{equation}
\label{TT}
q^{2k+1}n_2+qn_2-q^{2k}v+qu=\beta q^sn_2+\beta n_2.
\end{equation}

Now we distinguish {\it four} possibilities depending on the value $2k+1$ ($1\leq 2k+1\leq 2s-1$). Our result will follow from the fact that only one of them can hold.
\begin{itemize}
    \item Suppose that $2k+1=s-w$, $1\leq w < s$, $s\neq 2$.
\end{itemize}
    We are going to show, by contradiction, that this possibility cannot occur. One has
        $$q^{s-w}n_2+qn_2-q^{s-w-1}v+qu=\beta q^s n_2+\beta n_2, \;\mbox{ which implies}$$
        $$qn_2(\beta q^{s-1}-q^{s-w-1}-1)=qu-q^{s-w-1}v-\beta n_2 \;\mbox{ and then,}$$
        $$qn_2=\frac{qu-q^{s-w-1}v-\beta n_2}{\beta q^{s-1}-q^{s-w-1}-1}.$$
        Here $-q^{s-w-1}v-\beta n_2<0$, $\beta q^{s-1}-q^{s-w-1}-1>0$ and $\frac{q}{\beta q^{s-1}-q^{s-w-1}-1}<1$, since $s>2$. It implies that $qn_2<u$, obtaining a contradiction to the fact that $u<(q-1)n_2$.
\begin{itemize}
    \item Assume now that $2k+1=s-w$, $1 \leq w < s$, $s=2$.
\end{itemize}
    This option is not possible either. Indeed, in this case $k=0$ and Equality (\ref{TT}) is
        $$2qn_2+qu-v=\beta q^2 n_2 + \beta n_2,$$  which is equivalent to  $$v=-(q^2+1)\beta n_2+2qn_2+qu.$$
        Then, $v>0$ if and only if $u>\frac{n_2(q^2\beta+\beta-2q)}{q}$ and it gives a contradiction. In fact, if $\beta\neq 1$ it holds that $\frac{n_2(q^2\beta+\beta-2q)}{q}\geq (q-1)n_2$, exceeding the upper bound on $u$. Otherwise, $\beta=1$ implies $\frac{n_2(q^2\beta+\beta-2q)}{q}=\frac{n_2(q-1)^2}{q}$ and there is no $u\in \Z$ such that $n_2(q-1)\frac{q-1}{q}<u<n_2(q-1)$.

\begin{itemize}
    \item The third possibility is $2k+1=s+1$.
\end{itemize}
Then, Equality (\ref{TT}) can be expressed as follows:
    \begin{equation}
    \label{SS}
    q(u-q^{s-1}v)=(\beta-q)(q^s+1)n_2.
    \end{equation}
    The value $\beta$ must be a multiple of $q$ and $q(u-q^{s-1}v)<(q^s+1)n_2$ because the ranges of $u$ and $v$ show that $qu-q^sv<q(q-1)n_2-q^s<(q^s+1)n_2$. Thus, Equality (\ref{SS}) holds if and only if $\beta=q$ and $u=q^{s-1}v$. Therefore we get new solutions:
    $$(i'=n_2+q^{s-1}v,\,j'=qn_2-v).$$

    Finally, we consider the remaining case:
\begin{itemize}
    \item $2k+1=s+w$, $3\leq w\leq s-1$, $w$ odd.
\end{itemize}
We are going to show that it gives no new solution of  Equation \eqref{eqst}. Here, Equality (\ref{TT}) is
    $$q \gamma + \delta q^{s+w-1}=\beta n_1,$$
    for suitable positive integers $\gamma$ and $\delta$ depending on $u$ and $v$, respectively, and, by multiplying by $q^{s-w+2}$, we get $q^{s-w+3} \gamma + q \delta=\beta'n_1$ (recall that $q^{2s}=1$). To finish, since $s-w+3\leq s$, we fall back into the above cases, therefore we obtain no new solutions.
%\end{itemize}

Hence, the candidate for the bound on $a'_{\tau'}$ is given by the maximums of the pairs $(i'=n_2+q^{s-1}v,\,j'=qn_2-v)$. Let us study them.

Assume that $qn_2-v\geq n_2+q^{s-1}v$. This inequality happens if and only if $(q-1)n_2\geq (q^{s-1}+1)v$, which is equivalent to $v \leq \floor*{\frac{(q-1)n_2}{q^{s-1}+1}}$. In this case, the bound on $a_{\tau'}'$ is $qn_2-v-1$ for the maximum possible value $v$ since it gives the minimum value we look for. We also know that $v\leq \floor*{\frac{(q-1)n_2-1}{q^{s-1}}}$, it follows from the fact that $u=q^{s-1}v<(q-1)n_2$. As a consequence, in the previous situation,  one deduces that our bound for Hermitian self-orthogonality is
$$a_{\tau'}' \leq qn_2- \min\left\{ \floor*{\frac{(q-1)n_2}{q^{s-1}+1}},  \floor*{\frac{(q-1)n_2-1}{q^{s-1}}} \right\}-1.$$

%\Helena{Aquí pasa que en algunos casos el pequeño es el floor de la izquierda, y en estos casos se cumple que $n_2\geq \frac{q^{s-1}+2}{q-1}$. Con un programita comparando con Aly me da que nuestra cota es mejor una unidad en todos los casos excepto en algunos en los que evaluamos en todos los puntos, pero ni siquiera en todos. Algunos cálculos con magma me dan que la nuestra es la d}

We are going to prove that this is the actual bound because the complementary case cannot give a lower bound. Indeed, suppose that $qn_2-v< n_2+q^{s-1}v$, which is equivalent to $v > \frac{(q-1)n_2}{q^{s-1}+1}$, and let us show the result. Assume, by contradiction, that
$$n_2+q^{s-1}v < qn_2-\min\left\{ \floor*{\frac{(q-1)n_2}{q^{s-1}+1}},  \floor*{\frac{(q-1)n_2-1}{q^{s-1}}} \right\},$$
which leads to
\begin{equation}
\label{HHH}
\overbrace{\frac{(q-1)n_2}{q^{s-1}+1}}^{A:=} < v < \overbrace{\frac{(q-1)n_2}{q^{s-1}}}^{B:=}  - \overbrace{\frac{\min\left\{ \floor*{\frac{(q-1)n_2}{q^{s-1}+1}},  \floor*{\frac{(q-1)n_2-1}{q^{s-1}}} \right\}}{q^{s-1}}}^{C:=}.
\end{equation}
%Now $C>0$ and $B-C-A<1$ because $B-A=\frac{(q-1)n_2}{q^{s-1}(q^{s-1}+1)}<1$ (since $n_2\mid q^s-1$). As a consequence, there is no $v\in \mathbb{Z}$ as above, obtaining the desired contradiction.
Let us show that there is no positive integer $v$ satisfying (\ref{HHH}) which gives the desired contradiction.

First, assume  $C= \floor*{\frac{(q-1)n_2}{q^{s-1}+1}} / q^{s-1}$, that is, $C$ is given by the first quotient in the minimum that defines it. Set $C'= \frac{(q-1)n_2}{(q^{s-1}+1)q^{s-1}}$. Then, $B-A-C'=0$, $C= \frac{\floor*{A}}{q^{s-1}}$ and $C'= \frac{A}{q^{s-1}}$. Now $$B-A-C = B-A-C' +(C'-C) = C'- C = \frac{A- \floor*{A}}{q^{s-1}} < 1.$$
Thus, $v$ will be an integer if $v = \floor*{A} +1$ and $\floor*{A} +1 -A < \frac{A- \floor*{A}}{q^{s-1}}$, which holds if and only if
$A - \floor*{A} > \frac{q^{s-1}}{q^{s-1} +1}$ and it is a contradiction because $q^{s-1} +1$ is the denominator of $A$.

Now, write  $D= \frac{(q-1)n_2-1}{q^{s-1}}$ and
\[
D-A = \frac{(q-1)n_2 - (q^{s-1}+1)}{q^{s-1} (q^{s-1} +1)}.
\]
It remains to consider the case when $C$ is given by the second quotient in the minimum that defines it, that is $C = \floor*{\frac{(q-1)n_2-1}{q^{s-1}}}/q^{s-1}$, which is true if and only if $D-A <0$. Thus, $(q-1)n_2 - (q^{s-1}+1) <0$ and then $(q-1)n_2 < q^{s-1}+1$. This implies that $C=0$ and from (\ref{HHH}), we deduce
\[
(q-1)n_2 < v (q^{s-1}+1) < (q-1)n_2 + \frac{(q-1)n_2 }{q^{s-1}}.
\]
The above equality cannot hold because $\frac{(q-1)n_2 }{q^{s-1}} \leq 1$. Indeed, if $\frac{(q-1)n_2 }{q^{s-1}} > 1$, then $q^{s-1} < (q-1)n_2  < q^{s-1}+1$, which is not possible since all the values are positive integers. This concludes the proof.

\end{proof}

Our next result considers the former Case 2, where the length of the BCH code is $n_1= (q^s +1)n_2$, $n_2 \mid  q^s-1$ and $s$ is odd.
\begin{prop}
\label{Case2}
Keep the above notation and consider the BCH code $\mathcal{C}=\mathrm{BCH}^{n_1}(1,\tau')$ whose length $n_1$ satisfies the conditions given in Case 2. Then, $\mathcal{C}$ is Hermitian self-orthogonal if and only if
$$a_{\tau'}' \leq n_2-1.$$
\end{prop}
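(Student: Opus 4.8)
The plan is to compute, for lengths $n_1=(q^s+1)n_2$ with $s$ odd, the sharp quantity \eqref{CondZ} governing Hermitian self-orthogonality in Theorem \ref{self}. Concretely, I will show that the minimum of $\max\conj*{x,y}$ taken over all solutions $(x,y)$ of Equation \eqref{eqst} equals $n_2$, so that the sharp bound \eqref{cota} becomes $a'_{\tau'}\leq n_2-1$. Since Theorem \ref{self} establishes that this inequality is not only sufficient but also necessary (if $a'_{\tau'}$ exceeds the bound, self-orthogonality fails), identifying the minimum with $n_2$ yields exactly the asserted equivalence.

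For the inequality ``$\min\leq n_2$'' I would exhibit an explicit solution attaining $\max\conj*{x,y}=n_2$. Taking $k=\frac{s+1}{2}$ (an integer with $0\leq k\leq s-1$, since $s\geq 3$ is odd) and $\beta=q$, the pair $(x,y)=(n_2,n_2)$ solves \eqref{eqst} because $q\,n_2+q^{s+1}n_2=q(q^s+1)n_2=q\,n_1$. This is the odd-$s$ analogue of the base solution $(n_2,qn_2)$ used in the proof of Proposition \ref{Case1}, the parity of $s$ now placing the balanced solution at the index $2k=s+1$.

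The core of the argument, and the main obstacle, is the reverse inequality: every nonzero solution of \eqref{eqst} satisfies $\max\conj*{x,y}\geq n_2$. I would argue by contradiction, assuming $0\leq x,y\leq n_2-1$ with $(x,y)\neq(0,0)$, and split according to the position of the even integer $2k$ relative to the odd integer $s$ (so $2k\neq s$). When $2k\leq s-1$ a size estimate settles it at once: from $q+q^{s-1}<q^s+1$ (equivalently $(q-1)(q^{s-1}-1)>0$) one gets $0<qx+q^{2k}y\leq(q+q^{s-1})(n_2-1)<(q^s+1)n_2=n_1$, so the left-hand side cannot be a positive multiple of $n_1$. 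The delicate case is $2k\geq s+1$. Writing $r:=2k-s$ (an odd integer with $1\leq r\leq s-2$), I would reduce \eqref{eqst} modulo $q^s+1$, where $q^s\equiv-1$, to obtain $qx\equiv q^{r}y\pmod{q^s+1}$, and hence, using $\gcd(q,q^s+1)=1$, the relation $x\equiv q^{r-1}y\pmod{q^s+1}$. Setting $t:=(q^{r-1}y-x)/(q^s+1)$ gives an integer with $0\leq t\leq q^{r-1}-1$, and substituting $x=q^{r-1}y-(q^s+1)t$ back into \eqref{eqst} produces, after extracting the forced factor $q$ from $\beta$ (legitimate since $\gcd(q,n_2)=1$), the two identities $x=n_2\beta'-q^s t$ and $q^{r-1}y=n_2\beta'+t$ for some integer $\beta'\geq 1$.

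The contradiction then comes from reducing these identities modulo $n_2$ (where $q^s\equiv 1$) and modulo $q^{r-1}$ (where $q^s\equiv 0$): both give $t+x\equiv 0$, so that, as $\gcd(n_2,q^{r-1})=1$, one has $q^{r-1}n_2\mid t+x$, whereas $0\leq t+x\leq(q^{r-1}-1)+(n_2-1)<q^{r-1}n_2$. This forces $t=x=0$, whence $\beta'=0$, contradicting $\beta'\geq 1$. I expect the bookkeeping in this last case to be where the real care is needed: the reduction modulo $q^s+1$, the verification of the range of $t$, and the degenerate subcases $r=1$ and $n_2=1$ (where one of the divisibilities is vacuous and the conclusion must be read off directly) all require separate, if short, checks. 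By contrast, the exhibited solution and the size estimate for $2k\leq s-1$ are routine. Combining the two directions gives $\min\conj*{\max\conj*{x,y}}=n_2$, and therefore the sharp bound $a'_{\tau'}\leq n_2-1$ characterizing Hermitian self-orthogonality.
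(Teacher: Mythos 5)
Your proof is correct, and it shares the paper's skeleton: exhibit $(x,y)=(n_2,n_2)$ as a solution of \eqref{eqst} with $k=\tfrac{s+1}{2}$, $\beta=q$, prove by contradiction that no solution has both coordinates below $n_2$, and invoke the sharpness of the bound \eqref{CondZ} from Theorem \ref{self} to get the ``if and only if''. But your treatment of the hard case is genuinely different. The paper never splits on the size of $2k$ relative to $s$: after writing the putative smaller solution as $(n_2-u,n_2-v)$, ruling out $k=0$ separately, and dividing the equation by $q$, it uses $q^{2s}\equiv 1 \pmod{n_1}$ to multiply the equation by $q^{2s-2k+1}$, which swaps the roles of $u$ and $v$ and replaces $k$ by $s-k$; this yields the reduction ``WLOG $k\leq\tfrac{s+1}{2}$'', after which a one-line sign comparison finishes: the equation rearranges to $-(u+q^{2k-1}v)=n_2\left(\beta_2(q^s+1)-q^{2k-1}-1\right)$, whose left-hand side is negative while the right-hand side is nonnegative. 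You instead attack the range $2k\geq s+1$ head-on: reduction modulo $q^s+1$ (where $q^s\equiv-1$), extraction of the integer $t$ with $0\leq t\leq q^{r-1}-1$, the identities $x=n_2\beta'-q^st$ and $q^{r-1}y=n_2\beta'+t$, and the congruences modulo $n_2$ and modulo $q^{r-1}$ forcing $q^{r-1}n_2\mid x+t$, hence $x=t=0$ and $\beta'=0$, a contradiction. Both arguments are sound; the paper's symmetry trick is shorter and dispenses with your bookkeeping (the range of $t$, the degenerate subcases $r=1$ and $n_2=1$), while yours is more self-contained -- it avoids the ``the transformed equation is a particular case of the original'' step, which requires a little care to justify -- it handles $k=0$ uniformly inside the size estimate for $2k\leq s-1$ rather than as a separate remark, and it makes explicit the arithmetic reason ($\gcd(n_2,q)=1$ together with $q^s\equiv 1\pmod{n_2}$) why the threshold sits exactly at $n_2$.
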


\begin{proof}
The pair $(e=n_2,e'=q^{s-1}n_2)$ is a solution of \eqref{geq} for $\alpha=1$ and therefore $(i=n_2,j=n_2)$ is a solution of \eqref{eqst} for $\beta=q$ and $k=\frac{s+1}{2}$ (after multiplying the first equation by $q$). Then, it suffices to study whether there exists another solution  $(i',j')$  of \eqref{eqst} such that $\max\conj*{i',j'}<n_2$. If this is the case, then
the solution has the shape $(i',j')=(n_2-u,n_2-v)$ where
$0< u,v< n_2$. We are going to prove, by contradiction, that there is no such solution.

By substituting in \eqref{eqst}, we get
\begin{equation}
\label{NNN}
q(n_2-u)+q^{2k} (n_2-v)=\beta_1 n_1,
\end{equation}
where $\beta_1\geq q$. This follows from the fact that $k >0 $ because, otherwise, we would have
$q(n_2-u) + n_2 -v =(q+1)n_2 - u - v = \beta( q^s +1) n_2$ with $\beta \geq 1$, a contradiction.

Equality (\ref{NNN}) occurs if and only if
\begin{equation}\label{eq2.2}
(n_2-u)+q^{2k-1} (n_2-v)=\beta_2 n_1
\end{equation}
with $\beta_2=\frac{\beta_1}{q}\geq 1$.

We may assume that $k\leq \frac{s+1}{2}$ because, otherwise, multiplying  by $q^{2s-2k+1}$, we obtain an equivalent equation to \eqref{eq2.2}. In fact, after this multiplication, one gets
\begin{equation}\label{RR}
q^{2s-2k+1}(n_2-u)+n_2-v=\beta_3(q^s+1)n_2,
\end{equation}
with $\beta_3=q^{2s-2k+1}\beta_2\geq q$. Note that $k\leq s-1$ implies $2k\leq 2s-2<2s+1$ and $2s-2k+1>0$. Therefore, Equality (\ref{RR}) is a particular case of \eqref{eq2.2} because $u$ and $v$ play the same role and $k':=s-k< \frac{s+1}{2}$, since $k> \frac{s+1}{2}$.

Returning to Equality \eqref{eq2.2}, it is equivalent to
\begin{equation}
\label{La15}
-(u+q^{2k-1} v)=n_2(\beta_2 (q^s+1)-q^{2k-1}-1).
\end{equation}
Then, the facts that $u,v>0$, $\beta_2\geq 1$ and $k\leq \frac{s+1}{2}$ prove that the left and the right hand sides of Equality (\ref{La15}) have different signs with the exception of the case where $k=\frac{s+1}{2}$ and $\beta_2=1$. In this last case, the RHS vanishes while the LHS is negative. Thus, we get the desired contradiction.

Therefore, a necessary and sufficient condition for Hermitian self-orthogonality of $\mathcal{C}$ is
$$a_{\tau'}' \leq n_2-1.$$
\end{proof}

Let us study the Case 3. Recall that, in this case, $n_1 \mid q^{2s}-1$, $n_1=(q^s-1)(q^a+1)$, $s> a$ and $\frac{s+a}{2}$ odd.
\begin{prop}
\label{Case3}
Keep the above notation and consider the BCH code $\mathcal{C}=\mathrm{BCH}^{n_1}(1,\tau')$ whose length $n_1$ satisfies the conditions given in Case 3. Then, $\mathcal{C}$ is Hermitian self-orthogonal if and only if
$$a_{\tau'}'\leq q^{\frac{s+a}{2}}+q^{\frac{s-a}{2}}-2 \mbox{ when $a \neq 0$ (respectively, } a_{\tau'}' \leq 2q^{\frac{s}{2 }}-3 \mbox{ in case $ a=0$).}$$
\end{prop}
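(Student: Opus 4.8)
The plan is to derive the whole statement from the sharp characterization in Theorem~\ref{self}: $\mathcal{C}=\mathrm{BCH}^{n_1}(1,\tau')$ is Hermitian self-orthogonal if and only if $a'_{\tau'}\le L$, where $L=\min\{\max\{x,y\}\mid(x,y)\text{ solves }\eqref{eqst}\}-1$ is the sharp bound \eqref{CondZ}. Writing $m=\frac{s+a}{2}$ and $\ell=\frac{s-a}{2}$ (both integers because $s+a$ is even, with $m$ odd, $m+\ell=s$ and $m-\ell=a$), the claim is exactly that this minimum equals $q^{m}+q^{\ell}-1$ when $a\neq 0$, and $2q^{s/2}-2$ when $a=0$. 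Throughout I would use the identity $n_1=q^{s+a}+q^s-q^a-1$ and the factorization $q^{m}+q^{\ell}=q^{\ell}(q^{a}+1)$.

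\emph{Upper bound on the minimum.} I would exhibit a single explicit solution of \eqref{eqst} realizing the claimed value, which gives one inequality and the \emph{only if} direction. For $a\neq 0$ take $k=\frac{m+1}{2}$ (an integer since $m$ is odd, and $k\le s-1$ because $m\le s-1$), $x=q^{m}-q^{a}-1$, $y=q^{m}+q^{\ell}-1$ and $\beta=q$; using $2k-1=m$, $2m+1=s+a+1$ and $m+\ell+1=s+1$, a direct substitution yields $qx+q^{2k}y=q^{s+a+1}+q^{s+1}-q^{a+1}-q=q\,n_1$, while $y-x=q^{\ell}+q^{a}>0$ shows $\max\{x,y\}=y=q^{m}+q^{\ell}-1$. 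For $a=0$ this solution is improved by the symmetric one $x=y=2q^{s/2}-2$ with $k=\frac{s+2}{4}$ and $\beta=q$, for which $qx+q^{2k}y=(q+q^{s/2+1})(2q^{s/2}-2)=2q(q^s-1)=q\,n_1$ and $\max\{x,y\}=2q^{s/2}-2$. Hence $L\le q^{m}+q^{\ell}-2$ (resp. $L\le 2q^{s/2}-3$).

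\emph{Lower bound on the minimum.} This is the \emph{if} direction and the crux. Mirroring Propositions~\ref{Case1} and \ref{Case2}, I would assume a solution $(x,y)$ of \eqref{eqst} with $\max\{x,y\}<q^{m}+q^{\ell}-1$ (resp. $<2q^{s/2}-2$) and seek a contradiction. Writing $x$ and $y$ as signed displacements of the coordinates of the explicit solution above turns the bound on $\max$ into range and sign constraints on the displacements; substituting into \eqref{eqst} produces an equation relating the displacements, $\beta$ and the free power $q^{2k}$. I would then split according to the size of $2k$ relative to the critical exponent $m+1$ (equivalently $2k+1$ relative to $s$), as in the four-way split on $2k+1$ in Proposition~\ref{Case1}, and in each regime combine the displacement ranges with a positivity/divisibility analysis modulo $n_1$ to rule out any solution with $\max\{x,y\}<q^{m}+q^{\ell}-1$ (resp. $<2q^{s/2}-2$).

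The main obstacle is precisely this case analysis. Unlike Cases~1 and 2, the reference solution is asymmetric and encodes the interaction of the two factors $q^s-1$ and $q^a+1$ of $n_1$, so more exponent regimes arise and each has to be bounded individually; the hypothesis that $m=\frac{s+a}{2}$ is odd is what makes the critical value $k=\frac{m+1}{2}$ an integer and is used repeatedly. Finally, $a=0$ genuinely behaves differently, since there the symmetric solution $x=y$ lowers the bound by one unit, so it must be singled out, and one must verify that the smallest field $q=2$ produces no further exceptional solution, which is why Case~3, unlike Case~4, carries no exclusion.
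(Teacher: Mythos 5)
Your reduction to computing the sharp bound $L$ of Theorem \ref{self} is the same route the paper takes, and your \emph{only if} half is correct and coincides with the paper's: writing $m=\frac{s+a}{2}$ and $\ell=\frac{s-a}{2}$, the explicit solutions you exhibit --- $(x,y)=(q^{m}-q^{a}-1,\;q^{m}+q^{\ell}-1)$ with $2k-1=m$, $\beta=q$ for $a\neq 0$, and the symmetric $(x,y)=(2q^{s/2}-2,\;2q^{s/2}-2)$ with $2k-1=s/2$, $\beta=q$ for $a=0$ --- are exactly the reference solutions the paper uses in Parts A and B of its proof, and your substitution check that they satisfy equation (\ref{eqst}) is sound. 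This gives $L\leq q^{m}+q^{\ell}-2$ (respectively $L\leq 2q^{s/2}-3$), i.e.\ self-orthogonality forces the stated bound on $a'_{\tau'}$.

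The \emph{if} direction, however, is a genuine gap: you describe a strategy (displacements from the reference solution, a split on $2k$ relative to the critical exponent, ``positivity/divisibility analysis modulo $n_1$''), you correctly flag it as the crux, but you never execute it --- and this is where essentially all of the paper's proof lives. For $a\neq 0$ the paper writes a hypothetical smaller solution as $(i',j')=(q^{m}-q^{a}-1+u,\;q^{m}+q^{\ell}-1-v)$, solves for $u$, and then eliminates the regimes $2k-1<m$, $2k-1=m$, $2k-1>m$; the last regime needs a further split $\beta_1>q^{\alpha}$ versus $\beta_1\leq q^{\alpha}$, and inside the latter two more subcases, each resolved by comparing $q$-adic digit patterns (e.g.\ the expansions $1,0,\ldots,0,q-1,\ldots,q-1,q-2,q-1,\ldots,q-1$); the case $a=0$ requires a parallel analysis with cases on $k$ relative to $\frac{s/2+1}{2}$ and on $\beta_2$ versus $q^{s_1}$. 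Note also that ``divisibility modulo $n_1$'' cannot be the engine of the contradiction, since equation (\ref{eqst}) \emph{is} the statement that $qx+q^{2k}y\equiv 0 \pmod{n_1}$; the paper's contradictions come from sign constraints together with digit-level estimates on $q$-adic expansions, a tool your sketch does not identify. As written, the proposal therefore establishes only one implication of the claimed equivalence.
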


\begin{proof}
We divide the proof in two parts corresponding to the settings where $a\neq 0$ and $a=0$.\\

{\it Part A}: $a \neq 0$.

The pair $(i=q^{\frac{s+a}{2 }}-q^a-1,j=q^{\frac{s+a}{2}}+q^{\frac{s-a}{2}}-1)$ is a solution of \eqref{eqst} for $\beta=q$ and $2k-1=\frac{a+s}{2}$ (recall that $k\leq s-1$). Also $s>a$ implies $i>0$ and we notice that $s-a$ is even.

Assume that there exists another solution $(i',j')$ of \eqref{eqst} such that $\max\{i',j'\}<\max\{i,j\}=j$. Then, we can express our solution as $(i'=q^{\frac{s+a}{2 }}-q^a-1+u,j'=q^{\frac{s+a}{2}}+q^{\frac{s-a}{2}}-1-v)$,
where $-i+1\le u< q^{\frac{s-a}{2}}+q^a$ and $1\le v<j$. Our new solution must satisfy \eqref{eqst}. Therefore,
\begin{equation}
\label{MMM}
q(q^{\frac{s+a}{2 }}-q^a-1+u)+q^{2k}(q^{\frac{s+a}{2}}+q^{\frac{s-a}{2}}-1-v)=\beta (q^s-1)(q^a+1),
\end{equation}
where $\beta\geq q$, which holds because $k >0$. Note that, otherwise when $k=0$, we get
\begin{multline*}
	q(q^{\frac{s+a}{2 }}-q^a-1+u)+q^{2k}(q^{\frac{s+a}{2}}+q^{\frac{s-a}{2}}-1-v) < (q+1)j \\< (q+1) \left( q^{\frac{s+a}{2}}+q^{\frac{s-a}{2}} \right) < \beta \left( q^{\frac{s+a}{2}} + q^s - q^a -1 \right),
\end{multline*}
with $\beta \geq 1$, which taking into account the $q$-adic expansions and the range of values of $s$ and $a$ is false.

Equality (\ref{MMM}) is equivalent to
$$
q^{\frac{s+a}{2 }}-q^a-1+u+q^{2k-1}(q^{\frac{s+a}{2}}+q^{\frac{s-a}{2}}-1-v)=\beta_1 (q^s-1)(q^a+1)
$$
where $\beta_1=\frac{\beta}{q}\geq 1$. This is true if and only if
\begin{align}
    u&=\beta_1(q^s-1)(q^a+1)-q^{\frac{s+a}{2}}+(q^a+1)-q^{2k-1}(q^{\frac{s+a}{2}}+q^{\frac{s-a}{2}}-1-v)\\ \label{FER}
    &=\beta_1(q^{s+a}+q^s)-(\beta_1-1)(q^a+1)-q^{2k-1}(q^{\frac{s+a}{2}}+q^{\frac{s-a}{2}}-1-v)-q^{\frac{s+a}{2}}.
\end{align}
Let us see that the above equality contradicts the range of values of $u$ and, therefore, there is no solution $(i',j')$ as proposed.
To show it, we consider the following three possibilities:
\begin{itemize}
    \item First assume that $2k-1<\frac{s+a}{2}$.
\end{itemize}
Then, considering the $q$-adic expansion of $q^{2k-1}(q^{\frac{s+a}{2}}+q^{\frac{s-a}{2}}-1-v)$ for any $v$, it holds that $u > q^{s+a} > q^{\frac{s-a}{2}} +q^a$, which is a contradiction.
    %$C<\beta_1(q^{s+a}+q^s)$ and $u=O(\beta_1(q^{s+a}+q^s))>q^{\frac{s-a}{2}}+q^a$, giving a contradiction.
\begin{itemize}
    \item In a second step, suppose that $2k-1=\frac{s+a}{2}$.
\end{itemize}
Then, if $\beta_1=1$ we would have $u=q^{\frac{s+a}{2}}v$, and the range of values for $u$ and $v$ leads to a contradiction. If $\beta_1>1$, then $$u=q^{\frac{s+a}{2}}v+(\beta_1-1)(q^{s+a}+q^s)-(\beta_1-1)(q^a+1) $$ and, by considering the $q$-adic expansions of $u$ and  $q^{\frac{s-a}{2}} +q^a$, it is clear that $u \geq q^{\frac{s-a}{2}} +q^a$, getting again to a contradiction.

It remains to study the last possibility:
\begin{itemize}
\item $2k-1>\frac{s+a}{2}$.
\end{itemize}
We can write $2k-1=\frac{s+a}{2} + \alpha$, $\mathbb{N} \ni\alpha \geq 1$. Then, taking into account (\ref{FER}), one can express $u$ as follows:
    \begin{align*}
    u&=\beta_1\left(q^{s+a}+q^s\right)-(\beta_1-1)\left(q^a+1\right) -q^\alpha\left(q^{s+a}+q^s\right) + q^{\frac{s+a}{2}+\alpha}(1+v)- q^{\frac{s+a}{2}}\\
    &=\left(\beta_1-q^\alpha\right)\left(q^{s+a}+q^s\right)+q^{\frac{s+a}{2}+\alpha}(1+v)- \left[(\beta_1-1)(q^a+1)+q^{\frac{s+a}{2}}\right].
\end{align*}
Now, we divide again our study into two cases:

i) $\beta_1 > q^\alpha$. Then, we write $\beta_1 =q^\alpha + \beta_2$, $\beta_2 \geq 1$. Next we will see that, in this case, \begin{equation}
\label{CAR}
u \geq q^{\frac{s-a}{2}} +q^a,
\end{equation}
which contradicts the restriction for $u$ and therefore there is no solution $(i',j')$ as indicated. To prove it, it suffices to write $u$ in the form
\[
u= \beta_2 \left(q^{s+a}+q^s \right) + q^\alpha \left[(1+v) \, q^{\frac{s+a}{2}}-q^a-1 \right] - (\beta_2-1)\left(q^a+1 \right)- q^{\frac{s+a}{2}},
\]
and considering its $q$-adic expansion and that of $q^{\frac{s-a}{2}} + q^a$, it is clear that Inequality (\ref{CAR}) is true.

Let us see the complementary case.

ii) $\beta_1 \leq q^\alpha$. Set $\beta_1 = q^\alpha - \beta_2$, $\beta_2 \geq 0$. Then, as before,
\[
u= q^\alpha \left[ (1+v) q^{\frac{s+a}{2}}- \left( q^a +1 \right) \right] - \beta_2 \left[q^{s+a}+q^s- \left(q^a+1\right) \right] + q^a+1 - q^{\frac{s+a}{2}}.
\]
We are going to prove that $u$ cannot belong to the range $-i+1\le u< q^{\frac{s-a}{2}}+q^a$. To do it, we just have to see that
\[
M:= q^\alpha \left[ (1+v) q^{\frac{s+a}{2}}- \left( q^a +1 \right) \right] - \beta_2 \left[q^{s+a}+q^s- \left(q^a+1\right) \right]
\]
satisfies either

(I) $M \geq q^{\frac{s-a}{2}}+q^a + q^{\frac{s+a}{2}} - \left( q^a+1 \right) = q^{\frac{s+a}{2}} + q^{\frac{s-a}{2}} -1$,

or

(II) $M < -q^{\frac{s+a}{2}} + q^a +1 +1 + q^{\frac{s+a}{2}} -\left( q^a+1\right) = 1$.

To conclude the proof of this complementary case ii), we again consider two possible settings:

ii-a) First suppose that $q^\alpha \left[ (1+v) q^{\frac{s+a}{2}}- \left( q^a +1 \right) \right] \leq \beta_2 \left[q^{s+a}+q^s- \left(q^a+1\right) \right]$. Then $M \leq 0$ and the result is proved because (II) holds.

ii-b) Finally, assume that
\begin{equation}
\label{HCF}
q^\alpha \left[ (1+v) q^{\frac{s+a}{2}}- \left( q^a +1 \right) \right] > \beta_2 \left[q^{s+a}+q^s- \left(q^a+1\right) \right].
\end{equation}
Here $q^\alpha > \beta_2$ and, to prove (I),  it suffices to consider $v=1$ since $M$ increases as $v$ grows (notice that $\beta_2$ decreases when $v$ gets larger). On the one hand, the coefficients of the $q$-adic expansion of $q^{s+a}+q^s- \left(q^a+1\right)$ are of the form $$1,0, \ldots, 0, q-1, \ldots, q-1, q-2, q-1, \ldots, q-1,$$ where $1$ is the coefficient of $q^{s+a}$, the first coefficient $q-1$ corresponds to $q^{s-1}$ and the remaining coefficients are $q-1$ with the exception of that of $q^a$ which is $q-2$. On the other hand, the coefficients of the $q$-adic expansion of $2\; q^{\frac{s+a}{2}}- \left( q^a +1 \right)$, recall that $v=1$, are $$1,q-1, \ldots, q-1, q-2, q-1, \ldots, q-1,$$ where $1$ is the coefficient of $q^{\frac{s+a}{2}}$, $q-2$ that of $q^a$ and the remaining ones are $q-1$. Thus, whatever $\alpha$ is, The $q$-adic expansion of $q^\alpha \left[ 2 q^{\frac{s+a}{2}}- \left( q^a +1 \right) \right]$ is obtained by shifting that of  $2 q^{\frac{s+a}{2}}- \left( q^a +1 \right)$ in such a way that (\ref{HCF}) holds and, by paying attention  to the coefficient of $q^{\frac{s+a}{2}}$ in the $q$-adic expansion of $M$, we deduce that  (I) must also occur.
%\end{itemize}

Therefore, we have proved that
$$a_{\tau'}'\leq q^{\frac{s+a}{2}}+q^{\frac{s-a}{2}}-2$$
is an equivalent condition to Hermitian self-orthogonality of $\mathcal{C}$ when $a \neq 0$.
\vspace{2mm}

{\it Part B}: $a=0$.

The pair $(i=2q^{\frac{s}{2 }}-2,j=2q^{\frac{s}{2}}-2)$ is a solution of \eqref{eqst} for $\beta=q$ and $2k-1=\frac{s}{2}$. Again we study whether there exists another solution $(i',j')$ of \eqref{eqst}, such that $\max\conj*{i',j'}<2q^{\frac{s}{2 }}-2$. If it existed, then it would be of the form $(i',j')=(2q^{\frac{s}{2 }}-2-u,2q^{\frac{s}{2 }}-2-v)$, where
$0< u,v< 2q^{\frac{s}{2 }}-2$. As in the above proofs, reasoning by contradiction, we will prove that there is no such solution $(i',j')$.

Substituting in \eqref{eqst}, we get
\[
q(2q^{\frac{s}{2 }}-2-u)+q^{2k} (2q^{\frac{s}{2 }}-2-v)=\beta_1 n_1,
\]
where $\beta_1\geq q$ because $k>0$. Note that, as in the case $a \neq 0$, if $k=0$ we get a contradiction. The above displayed equality is true if and only if
\begin{equation}\label{eq2.3}
(2q^{\frac{s}{2 }}-2-u)+q^{2k-1} (2q^{\frac{s}{2 }}-2-v)=\beta_2 n_1
\end{equation}
with $\beta_2=\frac{\beta_1}{q}\geq 1$. %We may assume that $k\leq \frac{s}{2}$ \Helena{NO, faltan casos} because, otherwise, we could obtain an equivalent equation to \eqref{eq2.3} by multiplying it by $q^{2s-2k+1}$ (note that $k\leq s-1$ implies $2k\leq 2s-2<2s+1$ so $2s-2k+1>0$):
%$$q^{2s-2k+1}(2q^{\frac{s}{2 }}-2-u)+2q^{\frac{s}{2 }}-2-v=\beta_32(q^s-1)$$
%with $\beta_3=q^{2s-2k+1}\beta_2>0$, and this equation is a particular case of \eqref{eq2.3}. This is because $u$ and $v$ play the same role and $k':=s-k< \frac{s}{2}$, since $k> \frac{s}{2}$.
Notice that \eqref{eq2.3} is equivalent to
\begin{equation}
\label{SSS}
-(u+q^{2k-1} v)=\beta_22(q^s-1)-(2q^{\frac{s}{2}}-2)(q^{2k-1}+1).
\end{equation}
Assume that $k \leq \frac{\frac{s}{2}+1}{2}$, then the facts that $u,v>0$ and $\beta_2\geq 1$ show that the two sides of Equality (\ref{SSS}) have different sign, except in the case where $k=\frac{\frac{s}{2}+1}{2}$ and $\beta_2=1$. In this last case the RHS of (\ref{SSS}) vanishes while the LHS is negative. Therefore, one gets a contradiction.

If, otherwise, $\frac{\frac{s}{2}+1}{2} <  k = \frac{\frac{s}{2}+1+s_1}{2}$, with $1\leq s_1\leq \frac{3s-3}{2}$ (so that $k\leq s-1$), then we get
\begin{equation}
\label{KK}
-(u+q^{\frac{s}{2}+s_1} v)=\beta_22(q^s-1)-(2q^{\frac{s}{2}}-2)(q^{\frac{s}{2}+s_1}+1).
\end{equation}
Assume first that $\beta_2\geq q^{s_1}$, then the RHS of (\ref{KK}) can be written as the difference of two positive integers and its $q$-adic expansion either starts with a power of $q$ larger than or equal to $s+s_1$ or with $2q^{\frac{s}{2}+s_1}$. Therefore, the RHS  of (\ref{KK}) is positive, while the LHS is negative, providing a contradiction.

It remains to consider the situation where $\beta_2< q^{s_1}$. Here (\ref{KK}) can be expressed as
\[
2\left(q^{\frac{s}{2}}-1\right) \left( q^{\frac{s}{2}+s_1} +1 \right) = \beta_2 2 \left( q^s -1\right) + u + q^{\frac{s}{2}+s_1} v
\]
and then, the facts that $u,\,v< 2q^{\frac{s}{2}}-2$  and the $q$-adic expansion of both terms in the equality, supply again a contradiction.

As a consequence, we obtain the bound
$$a_{\tau'}' \leq 2q^{\frac{s}{2 }}-3$$
given in the statement.

%\end{itemize}

\end{proof}

To finish this section, we give a bound on $a_{\tau'}'$ for the Case 4. Recall that this case corresponds to $n_1 \mid q^{2s}-1$, $n_1 = \left(q^s-1\right) \left(q^a +1\right)$, $s >a$ and $\frac{s+a}{2}$ even, where we exclude the situation when $q=2$ and $a= s-2$.
\begin{prop}
\label{Case4}
Keep the above notation and consider the BCH code $\mathcal{C}=\mathrm{BCH}^{n_1}(1,\tau')$ whose length $n_1$ satisfies the conditions given in Case 4. Then, $\mathcal{C}$ is Hermitian self-orthogonal if and only if
$$a_{\tau'}' \leq q(q^{\frac{s+a}{2}}-q^a-1)-1.$$
\end{prop}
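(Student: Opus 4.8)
The strategy is the one shared by Propositions \ref{Case1}--\ref{Case3}: by Theorem \ref{self} the quantity \eqref{CondZ}, namely $\min\{\max\{x,y\}\mid (x,y)\text{ solves }\eqref{eqst}\}-1$, is a \emph{sharp} bound for Hermitian self-orthogonality of $\mathcal{C}$. Hence the entire proposition reduces to the single arithmetic statement that the minimum of $\max\{x,y\}$ over the solutions $(x,y)$ of \eqref{eqst} equals $q(q^{\frac{s+a}{2}}-q^a-1)$. I would prove this by establishing the two inequalities separately.

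\emph{Exhibiting an optimal solution.} I claim that $x=q(q^{\frac{s+a}{2}}-q^{a}-1)$ and $y=q^{\frac{s+a}{2}}+q^{\frac{s-a}{2}}-1$ solve \eqref{eqst} with $\beta=q^{2}$ and $2k=\frac{s+a}{2}+2$. Since Case 4 assumes $\frac{s+a}{2}$ even (and, as one checks, forces $s\ge 3$), the exponent $2k$ is a genuine even integer with $1\le k\le s-1$, and both $x,y$ lie in $\{0,\dots,n_1-1\}$ because $\frac{s+a}{2}+1\le s+a$. A one-line substitution gives $qx+q^{2k}y=q^{2}(q^{s+a}+q^{s}-q^{a}-1)=q^{2}n_1$. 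The decisive point is that $\max\{x,y\}=x$: expanding $x-y=q^{\frac{s+a}{2}}(q-1)-q^{a+1}-q^{\frac{s-a}{2}}-q+1$ shows this is positive precisely because the configuration $q=2,\ a=s-2$ has been excluded, as that is the unique case where the difference turns negative and the maximum jumps to $y$ (for $q=2,a=s-2$ one finds $x-y=-q^{\frac{s-a}{2}}-q+1<0$). This yields $\min\{\max\{x,y\}\}\le q(q^{\frac{s+a}{2}}-q^{a}-1)$, so $L\le q(q^{\frac{s+a}{2}}-q^{a}-1)-1$.

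\emph{Ruling out smaller solutions.} For the reverse inequality I would argue by contradiction, assuming a solution $(x,y)$ of \eqref{eqst} with $\max\{x,y\}<q(q^{\frac{s+a}{2}}-q^{a}-1)$. Since $n_1\mid q^{2s}-1$ gives $\gcd(q,n_1)=1$, whenever $k\ge 1$ the identity $qx+q^{2k}y=\beta n_1$ forces $q\mid\beta$, and dividing by $q$ normalizes the equation to $x+q^{2k-1}y=\beta' n_1$ with odd exponent $2k-1\in\{1,3,\dots,2s-3\}$ and $\beta'\ge 1$; the residual case $k=0$, i.e. $qx+y=\beta n_1$, is disposed of directly as in the proofs of Propositions \ref{Case2} and \ref{Case3}, since it forces $\max\{x,y\}$ above the target. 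Using the odd exponent $\frac{s+a}{2}+1$ of the distinguished solution as pivot, I would split into the three regimes $2k-1<\frac{s+a}{2}+1$, $2k-1=\frac{s+a}{2}+1$, and $2k-1>\frac{s+a}{2}+1$, writing the hypothetical solution as a perturbation of the optimal one and comparing, in each regime, the $q$-adic expansion of $\beta'(q^{s+a}+q^{s}-q^{a}-1)$ with that of $x+q^{2k-1}y$ to force $\max\{x,y\}\ge q(q^{\frac{s+a}{2}}-q^{a}-1)$.

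The main obstacle, exactly as in Proposition \ref{Case3}, will be the regime $2k-1>\frac{s+a}{2}+1$. There one must set $2k-1=\frac{s+a}{2}+1+\alpha$ with $\alpha\ge 1$ and subdivide according to the size of $\beta'$ relative to $q^{\alpha}$, tracking carries in the $q$-adic expansions of expressions of the shape $q^{s+a}+q^{s}-q^{a}-1$ and $(1+y)q^{\frac{s+a}{2}}-q^{a}-1$ to conclude that the first coordinate cannot fall below the target. This carry bookkeeping is delicate and is precisely where the hypothesis ``$q\ne 2$ or $a\ne s-2$'' is consumed: it guarantees that the coefficient of $q^{\frac{s+a}{2}}$ in the controlling expansion does not collapse, keeping the first coordinate as the genuine maximum and thereby making the bound $a'_{\tau'}\le q(q^{\frac{s+a}{2}}-q^{a}-1)-1$ sharp.
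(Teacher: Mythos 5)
Your first half is correct and coincides with the paper's own argument: the proposition is reduced, via the sharpness of the bound \eqref{CondZ} established in Theorem \ref{self}, to showing that $\min\{\max\{x,y\}\}$ over solutions of \eqref{eqst} equals $q(q^{\frac{s+a}{2}}-q^a-1)$; the distinguished solution $(x,y)=\bigl(q(q^{\frac{s+a}{2}}-q^a-1),\,q^{\frac{s+a}{2}}+q^{\frac{s-a}{2}}-1\bigr)$ with $\beta=q^2$ and $2k=\frac{s+a}{2}+2$ is exactly the one the paper uses, your verification $qx+q^{2k}y=q^2n_1$ is right, and the exclusion of $(q,a)=(2,s-2)$ is invoked for the same purpose, namely to guarantee $\max\{x,y\}=x$.

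The gap is in the second half: the claim that no solution of \eqref{eqst} has maximum strictly below $q(q^{\frac{s+a}{2}}-q^a-1)$ is the mathematical core of the proposition, and your proposal only announces a plan for it rather than proving it. The paper carries this out by writing a hypothetical better solution as $(i'=i-u,\ j'=j+v)$ with $u\geq 1$ and $-j+1\leq v\leq i-1-j$, solving the normalized equation for $u$, and deriving in every regime either $u<0$ or $u\geq i$, both contradictions. Two specific points where your sketch does not match what is actually needed: (a) in the hard regime $2k-1\geq \frac{s+a}{2}$ (note $2k-1=\frac{s+a}{2}$ is impossible by parity, so this is your second and third regimes combined), the paper does \emph{not} use the Case-3-style subdivision comparing $\beta_1$ with $q^{\alpha}$ that you transplant; it instead sets $C=q^{2k-1}(q^{\frac{s+a}{2}}+q^{\frac{s-a}{2}}-1+v)$, performs division with remainder $C=Q(q^{s+a}+q^s)+R$, and splits according to whether $(\beta_1-Q)(q^{s+a}+q^s)$ exceeds $\beta_1(q^a+1)$ — your plan might be adaptable, but since the carry analysis is never executed, this direction remains unproven, and it is precisely where all the difficulty lies. (b) Your assertion that the hypothesis ``$q\neq 2$ or $a\neq s-2$'' is ``consumed'' in the carry bookkeeping of the third regime is inaccurate: in the paper (and in your own first half) that hypothesis is used once and only once, to ensure that the first coordinate of the distinguished solution is its maximum; the ruling-out argument never needs it again.
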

\begin{proof}
The pair  $(i=q(q^{\frac{s+a}{2}}-q^a-1),j=q^{\frac{s+a}{2}}+q^{\frac{s-a}{2}}-1)$ is a solution of \eqref{eqst} for $\beta=q^2$ and $2k-2=\frac{s+a}{2}$ (remind that $k\leq s-1$).

Also $s>a$ implies $i>0$. Following the same line of the proofs of the results in this section, we assume the existence of another solution $(i',j')$ of \eqref{eqst} such that $\max\{i',j'\}< \max\{i,j\}$ and, by contradiction, we will prove that it is not possible.

We avoid the case $q =2$ and $s = a+2$. For any other $q,\; s$ and $a$ in this Case 4, $i = \max\{i,j\}$.  In fact, $i >j$ is equivalent to $$q^{\frac{s+a}{2} +1} > q^{\frac{s+a}{2}} + q^{a+1} + q^{\frac{s-a}{2} } + q -1$$ and, considering the involved $q$-adic expansions, it always holds with the exception of the case $ \frac{s+a}{2}= a+1$, which means that both equalities, $s=a+2$ and $q=2$, hold.

The possible solution $(i',j')$ must satisfy $$(i'=q(q^{\frac{s+a}{2 }}-q^a-1)-u,j'=q^{\frac{s+a}{2}}+q^{\frac{s-a}{2}}-1+v),$$
where $1\le u<i$ and $-j+1\le v\leq q^{\frac{s+a}{2}+1}-q^{\frac{s+a}{2}}-q^{a+1}-q^{\frac{s-a}{2}}-q$.
Then it must fulfill \eqref{eqst}:
$$
q(q(q^{\frac{s+a}{2 }}-q^a-1)-u)+q^{2k}(q^{\frac{s+a}{2}}+q^{\frac{s-a}{2}}-1+v)=\beta (q^s-1)(q^a+1),
$$
where $\beta\geq q$ because $k >0$. Note, that reasoning in a similar way to the proof of Proposition \ref{Case3}, the case $k=0$ leads to a contradiction. The above equality is equivalent to
$$
q(q^{\frac{s+a}{2 }}-q^a-1)-u+q^{2k-1}(q^{\frac{s+a}{2}}+q^{\frac{s-a}{2}}-1+v)=\beta_1 (q^s-1)(q^a+1)
$$
where $\beta_1=\frac{\beta}{q}\geq 1$. This happens if and only if
\begin{align*}
    u&=-\beta_1(q^s-1)(q^a+1)+q^{2k-1}(q^{\frac{s+a}{2}}+q^{\frac{s-a}{2}}-1+v)+q^{\frac{s+a}{2}+1}-q^{a+1}-q\\
    &=-\beta_1(q^{s+a}+q^s)+\beta_1(q^a+1)+q^{2k-1}(q^{\frac{s+a}{2}}+q^{\frac{s-a}{2}}-1+v)+q^{\frac{s+a}{2}+1}-q^{a+1}-q.
\end{align*}
Let us see that this contradicts the range of values of $u$, which concludes the proof. Consider two cases:
\begin{itemize}
    \item First assume that $2k-1\leq \frac{s+a}{2}-1$.
\end{itemize}
Then, $u \leq A + B$, where $A= (1- \beta_1)\left( q^{s+a} +q^s \right) + \beta_1 \left( q^a +1\right)$ and $$B= q^{\frac{s+a}{2}+1} - \left( q^{\frac{s+a}{2}} + q^{\frac{s+a}{2}-1} + q^{\frac{s+3a}{2}} + q^{s} + q^{a+1} +q \right).$$
    Now, we consider two possibilities.

    The first one is that $\beta_1 >1$, then clearly $u <0$. Otherwise $\beta_1 =1$, which implies $u \leq (q^a +1) + B$ and this last value is negative because $a < \frac{s+a}{2}$ and $\frac{s+a}{2} +1 \leq s$. As a consequence, we get again that $u <0$, which contradicts the fact $u \geq 1$.

\begin{itemize}
    \item The remaining case corresponds to the inequality $2k-1\geq \frac{s+a}{2}$.
\end{itemize}
Set $C:=q^{2k-1}(q^{\frac{s+a}{2}}+q^{\frac{s-a}{2}}-1+v)$ and recall that $i= q^{\frac{s+a}{2}+1}-q^{a+1}-q$. We again divide this case into two subcases.
        \begin{itemize}
            \item  $C\geq \beta_1(q^{s+a}+q^s)$. Then, one can write $C$ as $C=\beta_1(q^{s+a}+q^s)+C'$ with $C'\geq 0$. This implies $u=C'+\beta_1(q^a+1)+i>i$, supplying a contradiction.
            \item  Otherwise, $C<\beta_1(q^{s+a}+q^s)$, where we notice that $\beta_1$ could be greater than $q$. Then, we express $C$ as  $C=Q(q^{s+a}+q^s)+R$ with $Q\geq 0$, $0\leq R< q^{s+a}+q^s$. It implies $Q<\beta_1$ and $u=-(\beta_1-Q)(q^{s+a}+q^s)+R+\beta_1(q^a+1)+i$. Again, we consider two possibilities.

                The first one corresponds to the inequality $(\beta_1 -Q) \left(q^{s+a} + q^s\right) > \beta_1 (q^a +1)$. It can be equivalently written $\left[ (\beta_1 -Q) q^s - \beta_1 \right] \left(q^a +1\right) > 0$ and also
                \begin{equation}
                \label{FIN}
                \left[ \beta_1 (q^s -1) - Q q^s \right] \left(q^a +1\right) > 0.
                \end{equation}
                Clearly, $\beta_1 >Q$. Since the coefficients of the $q$-adic expansions of $q^s -1$ and $q^s$ are $0, q-1, \ldots, q-1$  and $1, 0, \ldots, 0$, respectively (the first coefficient corresponds to $q^{s}$), we deduce that the coefficient of $q^{s+a}$ (or, may be, of a larger exponent of $q$) in the $q$-adic expansion of the LHS of (\ref{FIN}) is not zero. Then, considering the $q$-adic expansion of $R+i$, we deduce $u <0$ giving the wanted contradiction.

                The second possibility holds whenever $(\beta_1 -Q) \left(q^{s+a} + q^s\right) \leq \beta_1 (q^a +1)$, and in this case $u \geq i$, which is also a contradiction.
        \end{itemize}

Therefore, we have proved that the bound in the statement:
$$a_{\tau'}' \leq q(q^{\frac{s+a}{2}}-q^a-1)-1$$
is an equivalent condition to the Hermitian self-orthogonality of the code $\mathcal{C}$.
\end{proof}

\begin{remark}
\label{REM}
{\rm
Propositions \ref{Case1}, \ref{Case2}, \ref{Case3} and \ref{Case4} consider large families of narrow-sense BCH codes and characterize  when codes in these families are Hermitian self-orthogonal. The literature contains some results on Hermitian dual containing BCH codes (regarded as cyclic codes) as we mentioned at the beginning of this section. Most of these results consider primitive  or narrow-sense BCH codes with a small range of lengths.  Our BCH codes and those mentioned in the literature give rise to comparable QECCs. Next we compare our parameters with those in \cite{Aly} and \cite{LLGW2019}, articles that consider a wider range of lengths than others in the literature.

Aly et al. in \cite{Aly} give bounds on $a'_{\tau'}$ that provide QECCs with parameters comparable to ours. Indeed, Theorem 14 in \cite{Aly} states that, if $s$ is even, and
\[
a'_{\tau'} \leq \left\lfloor \frac{n_1}{q^{2s}-1} \left( q^{s+1} - q^2 +1 \right) \right\rfloor -1,
\]
then one gets a QECC with the same parameters as those given by a Hermitian self-orthogonal code BCH$^{n_1} (1, \tau')$. Analogously, if $s$ is odd, the bound is $a'_{\tau'} \leq \left\lfloor \frac{n_1}{q^{s}+1} \right\rfloor -1$. Making computations,  we get (in this case of BCH codes) slight improvements (one unit in the bound for $a'_{\tau'}$ in most situations) in our Case 1 studied in Proposition \ref{Case1}. Concerning Case 2 (Proposition \ref{Case2}), we have $\left\lfloor \frac{n_1}{q^{s}+1} \right\rfloor -1 = n_2 - 1$, which is our bound, so we do not get improvement.

However, when $s$ is odd and we are in Cases 3 or 4 (Propositions \ref{Case3} or \ref{Case4}), one can see that our bound is better than that in \cite{Aly}. Let us show it.

Assume we are in Case 3, where $n_1= (q^s -1) (q^a +1)$, $s > a \neq 0$, $\frac{s+a}{2}$ and $s$ odd. Notice that this last condition implies that $a$ is also odd. Let us prove that the bound in \cite{Aly} is always less than ours. For a start,
\[
\left\lfloor \frac{n_1}{q^{s}+1} \right\rfloor - 1 = \left\lfloor \frac{(q^{s}-1)(q^{a}+1)}{q^{s}+1} \right\rfloor -1  \leq q^{a}-1.
\]
Then, considering our bound given in Proposition \ref{Case3},
\[
q^{a}-1 < q^{\frac{s+a}{2}} - q^{\frac{s-a}{2}} -2 \mbox{ iff }  q^{a} +1 < q^{\frac{s-a}{2}} \left( q^a - 1\right) \mbox{ iff } \;\;  1+ \frac{2}{q^a -1} < q^{\frac{s-a}{2}},
\]
which is always true under our conditions. Notice that the limit case is $a=1$, $s=5$ and $q=2$.

Finally, suppose that we are in Case 4 where $n_1 = (q^s -1) (q^a +1)$, $s > a \neq 0$, $\frac{s+a}{2}$ even and $s$ odd. Again, it implies that $a$ is also odd. Considering the bound given in Proposition \ref{Case4}, one gets
\[
q^{a}-1 < q \left( q^{\frac{s+a}{2}} - q^a -1 \right) \;\; \mbox{ iff } \;\; q^{a}-1 < q q^a \left(q^{\frac{s-a}{2}} -1\right) -q
\]
\[
\mbox{ iff } \; \; 1- \frac{1}{q^a} < q  \left(q^{\frac{s-a}{2}} -1\right) - \frac{1}{q^{a-1}} \;\;\mbox{ iff } \;\; 1- \frac{1}{q^a} + \frac{1}{q^{a-1}} < q  \left(q^{\frac{s-a}{2}} -1\right),
\]
and, again, it holds under our conditions.
%This last inequality is true except in the case $q=2, s=3$ and $a=1$. Taking into account that $q^{a}+1$ is larger than the bound in \cite{Aly}, it is easy to check that, in this case ($q=2, s=3$ and $a=1$), the actual bound in \cite{Aly} and our bound in Proposition \ref{Case3} coincide.

To conclude this remark, it is worthwhile to indicate that \cite{LLGW2019} gives a sharp bound for Hermitian dual containment of a family of BCH codes (regarded as cyclic codes) that gives rise to QECCs intersecting ours. Theorems 3 and 4 in \cite{LLGW2019} assume that $n_1 = \frac{(q^{s}+1)(q^{s}-1)}{b}$, where $b$ divides $q^{s}+1$ and $ 3 \leq b \leq 2(q^2-q+1)$. Our Propositions \ref{Case3} and \ref{Case4} provide some of the above cases and our bounds coincide with those in \cite{LLGW2019}. However, we supply QEECs which are not contemplated in \cite{LLGW2019}. Let us see an example. Set $q=3$, $s=5$ and $a=1$, then $n_1= (3^5-1)(3^1+1)=968$ and clearly we are in our Case 4, but $n_1 = (3^5-1) \frac{244}{61}$, that means $b=61$ and since $61 \nleq 2(9-3 +1)$, it is not under the conditions in \cite{LLGW2019}.}

\end{remark}

\section{New quantum stabilizer codes coming from homothetic-BCH codes}
\label{Sect4}

\subsection{The main result}
\label{Sect41}
Consider a homothetic-BCH code $\Su_{\Delta_1(\tau)}^P$ as introduced in Definition \ref{definit} with respect to a set $\Delta_1(\tau) = \Delta^{U(q^{2s}-1)} (1,\tau)= \Lambda_{a_1} \cup \cdots \cup \Lambda_{a_\tau}$ as given in (\ref{eldelta}). Set $\Delta'_1(\tau')=\Lambda'_{a_1'} \cup \cdots \cup \Lambda'_{a'_{\tau'}}$ the corresponding set introduced in (\ref{eldeltap}).
Gathering Theorems \ref{bounds} and \ref{self} and Propositions \ref{Case1}, \ref{Case2}, \ref{Case3} and \ref{Case4}, keeping the notation as in those results and applying Theorem \ref{col:stabherm}, we conclude the main result in this paper, Theorem \ref{mainth}. Next we state it, and then we give a number of examples that show it allows us to get quantum codes with better parameters than those appearing in the literature.

\begin{thm}\label{mainth}
    Let $q$ be a prime $p$ power and consider $s\geq 2$ a positive integer. Let $n_1$ be a positive integer that divides $q^{2s}-1$ as we are going to introduce, and define
    $$L:=\begin{cases}
			qn_2- \min\left\{ \floor*{\frac{(q-1)n_2}{q^{s-1}+1}},  \floor*{\frac{(q-1)n_2-1}{q^{s-1}}} \right\}-1, & \text{if } n_1=(q^s+1)n_2 \text{, } n_2\mid q^s-1\text{, } s \text{ even,}\\
			n_2-1 & \text{if } n_1=(q^s+1)n_2 \text{, } n_2\mid q^s-1\text{, } s \text{ odd,}\\
            q^{\frac{s+a}{2}}+q^{\frac{s-a}{2}}-2 & \text{if } n_1=(q^s-1)(q^a+1) \text{, } s> a\neq 0 \text{, } \frac{s+a}{2} \text{ odd,}\\
            2q^{\frac{s}{2 }}-3 & \text{if } n_1=2(q^s-1) \text{, } s> 0 \text{, } \frac{s}{2} \text{ odd,}\\
            q(q^{\frac{s+a}{2}}-q^a-1)-1 & \text{if } n_1=(q^s-1)(q^a+1) \text{, } s> a \text{, } \frac{s+a}{2} \text{ even,}\\ & \text{(except when $q=2$ and $a=s-2$).}
	\end{cases}
    $$
Set $\lambda$ another positive integer such that $\lambda\leq \frac{q^{2s}-1}{n_1}$ and $\lambda n_1 \nmid q^{2s}-1$.

Assume that $a'_{\tau'} \leq L$. Then, we have that $\Su_{\Delta_1(\tau)}^P$ is a $[\lambda n_1,\,\leq \sum_{\ell=1}^\tau \# \Lambda_{a_\ell}]_{q^2}$ Hermitian self-orthogonal code whose Hermitian dual has minimum distance at least $a_{\tau+1}$ and, therefore,  there exists a
    $$[[\lambda n_1, \geq \lambda n_1-2\sum_{\ell=1}^\tau \# \Lambda_{a_\ell},\,\geq a_{\tau+1}]]_q$$
quantum stabilizer code.

Additionally assume that $a'_{\tau'} \leq L$ and $p\mid \lambda$. Then, $\Su_{\Delta_0(\tau)}^P$ is a $[\lambda n_1,\,\leq \sum_{\ell=0}^\tau \# \Lambda_{a_\ell}]_{q^2}$ Hermitian self-orthogonal code whose Hermitian dual has minimum distance at least $a_{\tau+1}+1$ and therefore there exists a
$$[[\lambda n_1, \geq \lambda n_1-2\sum_{\ell=0}^\tau \# \Lambda_{a_\ell},\,\geq a_{\tau+1}+1]]_q$$
quantum stabilizer code.
\end{thm}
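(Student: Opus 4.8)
The statement is a synthesis of the self-orthogonality criterion of Theorem~\ref{self}, the parameter bounds of Theorem~\ref{bounds}, the explicit sharp bounds of Propositions~\ref{Case1}, \ref{Case2}, \ref{Case3} and \ref{Case4}, and the quantum construction of Theorem~\ref{col:stabherm}. The first thing I would do is match each branch of the piecewise definition of $L$ with the proposition that produces it: the two branches with $n_1=(q^s+1)n_2$ are the bounds of Propositions~\ref{Case1} ($s$ even) and~\ref{Case2} ($s$ odd); the branches $q^{\frac{s+a}{2}}+q^{\frac{s-a}{2}}-2$ and $2q^{\frac{s}{2}}-3$ are, respectively, Part~A ($a\neq 0$) and Part~B ($a=0$, where $n_1=2(q^s-1)$) of Proposition~\ref{Case3}; and the final branch is Proposition~\ref{Case4}. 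Thus in every case $L$ equals the sharp threshold in~\eqref{CondZ} that $a'_{\tau'}$ may not exceed.

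For the first assertion I would proceed as follows. By Theorem~\ref{self}, whenever $a'_{\tau'}$ is at most the quantity in~\eqref{CondZ} attached to the length $n_1$, the BCH code $\mathrm{BCH}^{n_1}(1,\tau')=\Co_{\Delta'_1(\tau')}^{U(n_1)}\cap\F_{q^2}^{n_1}$ is Hermitian self-orthogonal, and hence so are $\mathcal{H}_{\Delta_1(\tau)}^{P}$ and $\Su_{\Delta_1(\tau)}^{P}$. The matching proposition identifies this quantity with the displayed value $L$, so the hypothesis $a'_{\tau'}\leq L$ secures Hermitian self-orthogonality of $\Su_{\Delta_1(\tau)}^{P}$. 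This code has length $\lambda n_1$; Theorem~\ref{bounds}(1) bounds its dimension by $\sum_{\ell=1}^{\tau}\#\Lambda_{a_\ell}$, and Theorem~\ref{bounds}(2) bounds the minimum distance of its Hermitian dual below by $a_{\tau+1}$. Feeding a $[\lambda n_1,k]_{q^2}$ Hermitian self-orthogonal code with $k\leq\sum_{\ell=1}^{\tau}\#\Lambda_{a_\ell}$ into Theorem~\ref{col:stabherm} produces an $[[\lambda n_1,\,\lambda n_1-2k,\,\geq a_{\tau+1}]]_q$ stabilizer code, and substituting the upper bound for $k$ yields exactly the stated parameters.

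For the second assertion I would isolate the only new ingredient, the hypothesis $p\mid\lambda$. Enlarging $\Delta_1(\tau)$ to $\Delta_0(\tau)=\Lambda_{a_0}\cup\Delta_1(\tau)$ with $a_0=0$ adds, in the orthogonality test of~\eqref{advsoc}, only the pairs of exponents $(e,e')$ with a zero entry. For $(0,e')$ and $(e,0)$ with the other exponent a nonzero class modulo $n_1$, the factor $\sum_{i=0}^{n_1-1}\zeta_{n_1}^{i(e+qe')}$ still vanishes exactly as in Theorem~\ref{self}, since $\gcd(q,n_1)=1$ forces $qe'\not\equiv 0$ and $e\not\equiv 0$ modulo $n_1$. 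The single genuinely new pair is $(e,e')=(0,0)$: there the first factor is $\sum_{i=0}^{n_1-1}1=n_1\neq 0$ in $\F_{q^{2s}}$ (because $p\nmid n_1$), while the second factor of~\eqref{advsoc} becomes $1+1+\cdots+1=\lambda$ (with $\lambda$ summands), which vanishes in $\F_{q^{2s}}$ precisely when $p\mid\lambda$. Hence under $p\mid\lambda$ every product in~\eqref{advsoc} is zero and $\Su_{\Delta_0(\tau)}^{P}$ is Hermitian self-orthogonal; Theorem~\ref{bounds} then gives $\dim\leq\sum_{\ell=0}^{\tau}\#\Lambda_{a_\ell}$ (one unit more, as $\#\Lambda_{a_0}=1$) and dual distance $\geq a_{\tau+1}+1$, and Theorem~\ref{col:stabherm} turns these into the second family of quantum parameters.

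The genuine mathematical content lives in Theorem~\ref{self} and Propositions~\ref{Case1}, \ref{Case2}, \ref{Case3} and~\ref{Case4}, so the work here is largely organizational. I expect the only delicate point to be the bookkeeping in the second assertion: verifying that adjoining the zero coset introduces no obstruction beyond the pair $(0,0)$, and that this obstruction is removed exactly by the divisibility $p\mid\lambda$ through the vanishing of the geometric factor $1+\gamma^{e+qe'}+\cdots+\gamma^{(\lambda-1)(e+qe')}$ appearing in~\eqref{advsoc}.
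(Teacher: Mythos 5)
Your proposal is correct and follows essentially the same route as the paper: the paper's proof of Theorem \ref{mainth} is precisely the one-step assembly you describe, gathering Theorems \ref{bounds} and \ref{self}, Propositions \ref{Case1}, \ref{Case2}, \ref{Case3} and \ref{Case4} (whose bounds are the branches of $L$), and applying Theorem \ref{col:stabherm}. Your explicit check of the $\Delta_0(\tau)$ case --- that the only new obstruction is the pair $(e,e')=(0,0)$, removed by the vanishing of the geometric factor $1+\gamma^{e+qe'}+\cdots+\gamma^{(\lambda-1)(e+qe')}=\lambda$ when $p\mid\lambda$ --- is exactly the observation the paper makes in the proof of Theorem \ref{self} and the remark following it, merely spelled out in greater detail.
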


\subsection{Examples}
\label{Sect42}

In Subsections \ref{421} and \ref{422}, we provide examples obtained by applying the above theorem which are records according to \cite{codetables}. Notice that, although our theorem supplies a lower bound for the dimension of the stabilizer code, we can easily compute the actual one with the computational algebra system MAGMA \cite{Magma}. Finally, in Subsection \ref{423}, by considering a BCH code as introduced in Definition \ref{AA}, we give the parameters of two QECCs which are also records.

\subsubsection{Quantum codes over $\mathbb{F}_2$} \label{421}
%\hfill\break

With the above notation, set $q=2$, $s=5$, $a=1$ and $\lambda=2$. Then, we are in the case where $n_1=(q^s-1)(q^a+1)$, $\frac{s+a}{2}$ odd, $n_1=31 \cdot 3$ and $n=186$.
%Our bound says that we can achieve stabilizer codes of minimum distance $d\ge q^\frac{s+a}{2}+q^\frac{s-a}{2}-1=11$.

%So we may take cyclotomic cosets with representative less than or equal to 10.

Consider the code $\Co=\Su_\Delta^P$, where $P$ is the zero set of the ideal
$$I=\gen*{(X^{n_1}-1)(X^{n_1}-\gamma^{n_1})}_{\mathbb{F}_{2^{10}}[X]},$$
$\gamma$ a primitive element of $\mathbb{F}_{2^{10}}$, and $\Delta$ consists of the union of the following cyclotomic cosets of $\{0, 1, \ldots, 2^{10}-1\}$ with respect to $2^2$:
\begin{multline*}
\Lambda_1 =\{ 1,4,16,64,256 \}, \Lambda_2 =\{2,8,32,128,512  \}, \Lambda_3 =\{3,12,48,192,768  \},\\ \Lambda_5 =\{5,20,80,257,320  \},\Lambda_6 =\{6, 24, 96,384,513  \}, \Lambda_7 =\{ 7,28,112,448,769 \}.
\end{multline*}
Now, the set $\Delta'$ obtained by reducing the elements in $\Delta$ modulo $n_1$ is the union of the following cyclotomic cosets in $E(n_1) =\{0,1, \ldots, n_1-1\}$ with respect to $2^2$:
\begin{multline*}
\Lambda'_1 =\{ 1,4,16,64,70 \}, \Lambda'_2 =\{ 2,8,32,35,47  \}, \Lambda'_3 =\{3,6, 12, 24, 48  \},\\ \Lambda'_5 =\{5,20,41, 71,80 \}, \Lambda_7 =\{ 7, 19, 25,28,76 \}.
\end{multline*}

Then, by Theorem \ref{mainth}, since $L=10$ and $a'_{\tau'} =7 < L$, $\Co$ is a $4$-ary Hermitian self-orthogonal code giving rise to a new quantum stabilizer  code $\mathcal{Q}$ with parameters $[[186,126,\geq 9]]_2$. Moreover, by \cite[Lemma 69]{Ketkar}, one gets new quantum codes with parameters
$[[187,126,\geq 9]]_2$, $[[188,126,\geq 9]]_2$ and $[[189,126,\geq 9]]_2$. According to \cite{codetables}, up to know, there was no construction of codes with such parameters.
%$\mathcal{Q}_2=ExtendCode(\mathcal{Q}_1)=[[187,126,\geq 9]]_2 $, $\mathcal{Q}_3=ExtendCode(\mathcal{Q}_2)=[[188,126,\geq 9]]_2 $ $\mathcal{Q}_4=ExtendCode(\mathcal{Q}_3)=[[189,126,\geq 9]]_2$.

\subsubsection{Quantum codes over $\mathbb{F}_5$} \label{422}
%\hfill\break

Set $q=5$, $s=2$ and $\lambda=2$, which fits in the case $n_1=2(q^s-1)$, $s>0$, $\frac{s}{2}$ odd. Thus, $n_1=48$ and $n=96$.
%Our bounds says that we can achieve codes of minimum distance $d\ge 2q^\frac{s}{2}-2=8$. So we may take cyclotomic cosets with representative less than or equal to 7.

Take the code $\Co_1=\Su_\Delta^P$, where $P$ is the zero set of the ideal
$$I=\gen*{(X^{n_1}-1)(X^{n_1}-\gamma^{n_1})}_{\mathbb{F}_{5^{4}}[X]},$$
$\gamma$ a primitive element of $\mathbb{F}_{5^{4}}$, and $\Delta$ consists of the union of the following cyclotomic cosets of $\{0, 1, \ldots, 5^4 -1\}$ with respect to $5^2$:
\begin{multline}
\label{ejf5}
\Lambda_1 =\{1,25\}, \Lambda_2 =\{2, 50\}, \Lambda_3 =\{3, 75\}, \Lambda_4 =\{4, 100\},\\ \Lambda_5 =\{5,125\}, \Lambda_6 =\{6, 150\}, \Lambda_7 =\{7, 175\}.
\end{multline}
Computing the set $\Delta'$ obtained by reducing the elements in $\Delta$ modulo $n_1$, Theorem \ref{mainth} shows that $\Co_1$ is a $5^2$-ary Hermitian self-orthogonal code giving rise to a new quantum code $\mathcal{Q}_1$ with parameters $[[96,68,\geq 8]]_5$, note that $L=7$ in this case. A new quantum code with parameters $[[97,68,\geq 8]]_5$ can be deduced from \cite[Lemma 69]{Ketkar}.
%By using the length extension propagation rule for quantum codes, we can extend $\mathcal{Q}_1$ obtaining also a new one:

Now, the code $\Co_2 =\Su_\Delta^P$, where $P$ is as we have just described but $\Delta$ consists of the union $\cup_{i=1}^{6} \Lambda_{i}$, $\Lambda_{i}$ being as in (\ref{ejf5}), is a $5^2$-ary Hermitian self-orthogonal code by Theorem \ref{mainth}. Thus it gives rise to a new quantum stabilizer code $\mathcal{Q}_2$ with parameters $[[96,72,\geq 7]]_5$. Again, by \cite[Lemma 69]{Ketkar}, one gets new quantum codes with parameters $[[97,72,\geq 7]]_5$, $[[98,72,\geq 7]]_5$ and $[[99,72,\geq 7]]_5$, According to \cite{codetables}, up to now, there was no construction for codes as those given in this subsection.
%$Q_4=ExtendCode(Q_3)=[[97,72,7]]_5 $, $Q_5=ExtendCode(Q_4)=[[98,72,7]]_5 $, $Q_6=ExtendCode(Q_5)=[[99,72,7]]_5 $.

\subsubsection{Quantum codes over $\mathbb{F}_8$} \label{423}

Our last examples show that regarding BCH codes as defined in Definition \ref{AA}, we are able to provide new QECCs obtained from Hermitian self-orthogonality.

Pick $q=8$, $s=2$ and $N=91$ and consider the $8^2$-ary code $\mathcal{D}= \mathrm{BCH}^{N}(1,9)$ defined by the set
$\Delta^{U(91)}(1,9)$. It is the union of the following cyclotomic cosets with respect to $8^2$ modulo $N$:
\begin{multline*}\Lambda_1 =\{ 1, 64\}, \Lambda_2 =\{2, 37 \},\Lambda_3 =\{3,10 \},\Lambda_4 =\{4,74 \},\Lambda_5 =\{5,47 \},\\ \Lambda_6 =\{6,20 \},\Lambda_7 =\{7, 84 \},\Lambda_8 =\{8, 57 \},\Lambda_9 =\{ 9, 30\}.\end{multline*} Then, $\mathcal{D}$ is an $8^2$-ary Hermitian  self-orthogonal code giving rise to a new quantum stabilizer code $\mathcal{Q}_{\mathcal{D}}$ with parameters $[[91,55,\geq 11]]_8$. By \cite[Lemma 69]{Ketkar}, there is also a quantum code with parameters $[[92,55,\geq 11]]_8$.

As in the above subsections, both quantum codes are new because there was no construction for quantum codes with the above indicated parameters (see \cite{codetables}).

\bibliographystyle{plain}
\bibliography{bibliocodigos}
	
\end{document}